\newcommand{\abs}[1]{\ensuremath{|#1|}}
\newcommand{\Abs}[1]{\ensuremath{\left|#1\right|}}
\newcommand{\norm}[2]{\ensuremath{|\!|#1|\!|_{#2}}}
\newcommand{\Norm}[2]{\ensuremath{\left|\!\left|#1\right|\!\right|_{#2}}}
\newcommand{\tr}{\textnormal{tr}}
\newcommand{\trace}[1]{\ensuremath{\tr (#1)}}
\newcommand{\Trace}[1]{\ensuremath{\tr \left( #1 \right)}}
\newcommand{\ptr}[1]{\textnormal{tr}_{#1}\,}
\newcommand{\ptrace}[2]{\ensuremath{\ptr{#1} (#2)}}
\newcommand{\idx}[2]{{#1}_{#2}}
\newcommand{\ket}[1]{| #1 \rangle}
\newcommand{\keti}[2]{| #1 \rangle_{#2}}
\newcommand{\braket}[2]{\langle #1 | #2 \rangle}
\newcommand{\bracket}[3]{\langle #1 | #2 | #3 \rangle}
\newcommand{\proj}[2]{| #1 \rangle\!\langle #2 |}
\newcommand{\proji}[3]{| #1 \rangle\!\langle #2 |_{#3}}
\newcommand{\id}{\ensuremath{\mathds{1}}}
\newcommand{\idi}[1]{\ensuremath{\idx{\mathds{1}}{#1}}}
\newcommand{\cB}{\mathcal{B}}
\newcommand{\cG}{\mathcal{G}}
\newcommand{\cH}{\mathcal{H}}
\newcommand{\cO}{\mathcal{O}}
\newcommand{\cT}{\mathcal{T}}
\newcommand{\upp}[2]{\ensuremath{#1}^{(#2)}}
\theoremstyle{plain}
\newtheorem{lemma}{Lemma}
\newtheorem{theorem}[lemma]{Theorem}
\newtheorem{corollary}[lemma]{Corollary}
\theoremstyle{definition}
\begin{document}
\title{Perturbation Theory for Parent Hamiltonians of Matrix Product States}
\author{Oleg Szehr}
\email{oleg.szehr@posteo.de}
\affiliation{Zentrum Mathematik, Technische Universit\"{a}t M\"{u}nchen, 85748 Garching, Germany}
\author{Michael M. Wolf}
\email{wolf@ma.tum.de}
\affiliation{Zentrum Mathematik, Technische Universit\"{a}t M\"{u}nchen, 85748 Garching, Germany}
\date{February 24, 2014}
\begin{abstract}
This article investigates the stability of the ground state subspace of a canonical parent Hamiltonian of a Matrix product state against local perturbations. We prove that the spectral gap of such a Hamiltonian remains stable under weak local perturbations even in the thermodynamic limit, where the entire perturbation might not be bounded. Our discussion is based on preceding work by D.A.\:Yarotsky that develops a perturbation theory for relatively bounded quantum perturbations of classical Hamiltonians. We exploit a renormalization procedure, which on large scale transforms the parent Hamiltonian of a Matrix product state into a classical Hamiltonian plus some perturbation. We can thus extend D.A.\:Yarotsky's results to provide a perturbation theory for parent Hamiltonians of Matrix product states and recover some of the findings of the independent contributions \cite{Stable,Rob}.

\end{abstract}

\maketitle

\tableofcontents
\section{Introduction}
The purpose of this article is to investigate the low energy sector of certain models of many-body quantum systems with local interaction. We are interested in the stability of quantum phases when small perturbations act on the system. In particular, we aim at understanding the conditions under which certain physical properties of the ground state change smoothly when an interaction is added to the model Hamiltonian. In this article we study stability of the ground state in the so-called parent Hamiltonian model~\cite{MPSwerner}.

Matrix product states (MPS) have been an extremely useful tool in the study of the ground state physics of many-body quantum systems.
With their local structure MPS provide an efficient description of states arising from local interactions and constitute a natural framework for the analysis of local gapped Hamiltonians in 1D. In fact, the matrix product state representation lies at the heart of the very successful density matrix renormalization group method \cite{DMRG1,DMRG2}. To any MPS a local frustration-free and gapped Hamiltonian having this MPS as a unique ground state can be associated.
A canonical choice of such Hamiltonians was introduced in \cite{MPSwerner} and is referred to as \emph{parent Hamiltonian} of the MPS. On the one hand the local structure of the MPS endows the canonical parent Hamiltonian with the structure necessary for a rigorous analysis.
On the other hand canonical parent Hamiltonians constitute a wide class of local Hamiltonians and include many important special cases such as the AKLT-Hamiltonian~\citep{AKLT}.

We are interested in how the parent Hamiltonian model behaves under small perturbations, as this allows one to use the idealization to predict the behaviour of actual physical systems.
It seems generally expected that if a ground state of a quantum many-body system is in a non-critical regime characterized by the presence of a local spectral gap and exponential decay of correlations, then the system remains in this phase under sufficiently weak perturbations.
We prove that for translationally invariant parent Hamiltonians of generic MPS this is indeed the case i.e.~we show that the spectral gap of such a Hamiltonian is stable under arbitrary local perturbations even in the thermodynamic limit. This result itself is not new. It was shown in \cite{Stable} that local Hamiltonians that satisfy the \emph{Local Topological Quantum Order} (LTQO) condition and that are \emph{locally gapped} are stable under local perturbations. It was also claimed in \cite{Stable} and shown in \cite{Rob} that parent Hamiltonians of MPS have LTQO. (However, in spin systems of higher spatial dimension the presence of LTQO is hard to verify.) The fact that parent Hamiltonians are locally gapped was already known from \cite{Nacht1}. Hence, the stability of the spectral gap against sufficiently weak perturbations follows.

The contribution at hand contains a new proof of this result. Our derivation is based on the observation that with increasing system scale a matrix product state \lq\lq{}looks more and more classical\rq\rq{} \cite{RG}. 
We exploit a renormalization group flow on parent Hamiltonians to prove that on sufficiently large scale a (generic) parent Hamiltonian can be seen as a perturbation of a classical system. Hence, any sufficiently small quantum perturbation of a parent Hamiltonian is equivalent to a relatively bounded perturbation of a classical model. We then draw on the theory for ground states in quantum perturbations of classical lattice systems by D.A. Yarotsky \cite{YAR} to conclude our proof. 
The results presented in this article were achieved independently of the contributions \cite{Stable,Rob}, before the publication of the latter.

%
%
%
%
%
%
%
%
%
%
%
%
\section{Preliminaries}
\label{prel}
As mentioned in the introduction, this article investigates how the ground state subspace of an MPS parent Hamiltonian behaves under small perturbations. This section reviews the required definitions and basic results.
\subsection{Notation}
\label{prel:not}

We model quantum spin chains as connected subsets $\Lambda\subset \mathbb{Z}$, where each site $x\in\Lambda$ is equipped with a $d$-dimensional, complex Hilbert space $\cH_x$. We denote by $\Lambda_k$ with $k\in\mathbb{N}$ connected subsets of $\Lambda$ and by $\abs{\Lambda_k}$ the number of sites in $\Lambda_k$. The total Hilbert space associated to a finite set $\Lambda\subset\mathbb{Z}$ will be denoted by $\cH_\Lambda=\bigotimes_{x\in\Lambda}\cH_x$.
The interactions on the spin chain are given by a translationally invariant (TI) Hamiltonian with some fixed interaction range $\Lambda_0$.
Such Hamiltonians can formally be written as
\begin{align*}
H_\Lambda=\sum_{x\in\Lambda} h_x,
\end{align*}
where $h_x$ is a positive semi-definite operator acting (non-trivially) on $\cH_{\Lambda_0+x}$ and $\Lambda_0+x$ is a translate of $\Lambda_0$ by $x$.  We will assume that $H_\Lambda$ has a non-degenerate ground state $\ket{\Omega}_\Lambda$ and that $H_\Lambda$ has a \emph{spectral gap} $\gamma>0$ above the ground state energy
$$H_\Lambda\big|_{\cH_\Lambda\ominus\ket{\Omega}_\Lambda}\geq\gamma\:\id.$$

Moreover, the Hamiltonians considered in this article will be \emph{frustration free}, that is each interaction term $h_x$ minimizes the global ground state energy: for all $x$ we have $h_x\ket{\Omega}_\Lambda=0$. We analyse how the spectral gap behaves if the Hamiltonian is perturbed with local interactions. Formally, we add a perturbation
$$\Phi_\Lambda=\sum_{x\in\Lambda}\phi_x,$$
where each of the terms $\phi_x$ acts locally on a finite subset of $\Lambda$.
Often, we will find it convenient to identify the first and last site of $\Lambda$ to impose \emph{periodic boundary conditions} (PBC) on the system.

To distinguish particular Hilbert subspaces of $\cH_\Lambda$ we will add Latin subscripts, for example $\cH_A$ and $\cH_{B}$. For any operator $X$ acting on a finite subset of the chain we denote by $\norm{X}{p}$ the Schatten $p$-norm of $X$. If $X$ acts on an infinite subsets we will only employ the $\norm{\cdot}{\infty}$-norm, which coincides with the usual operator norm.

As mentioned before we will consider a renormalization group flow that transforms the MPS parent Hamiltonian into a classical Hamiltonian. This flow will be modeled using a consecutive application of a linear map $\cT$ acting on matrices $X$. More precisely, we define the map $\cT$ by $\cT(X):=\sum_i A_iXA_i^\dagger$, where the summation goes over a set of so-called Kraus operators $\{A_i\}_i$. Maps with this structure are completely positive (CP). For each such map the dual map $\cT^*$ is defined by $\cT^*(X):=\sum_i A_i^\dagger XA_i$. $\cT^*$ is simply the adjoint of $\cT$ with respect to the Hilbert-Schmidt inner product $\braket{X}{Y}=\trace{X^{\dagger} Y}$. $\cT$ is called unital (CPU) iff it preserves the identity operator $\cT(\id)=\id$ and $\cT$ is called trace-preserving (CPTP) iff $\cT^*(\id)=\id$. 
%
%
%
%
%
%
\subsection{Matrix Product States}\label{prel:MPS}
We consider a finite subset $\Lambda\subset\mathbb{Z}$ consisting of $N$ sites, whose Hilbert spaces are each of dimension $d$. Every pure state of the spin system of $\Lambda$ can be written as 
\begin{align*}
\ket{\Psi}=\sum_{i_1,...,i_N=1}^d\trace{A_{i_1}^{[1]}\cdot A_{i_2}^{[2]}\cdot...\cdot A_{i_N}^{[N]}}\ket{i_1...i_N}
\end{align*}
with site dependent $D_k\times D_{k+1}$ matrices $A_{i_k}^{[k]}$ \cite{Vidal03,MPS}. States of this structure are called Matrix product states. In the case of periodic boundary conditions and translational invariance of the MPS it is possible to show \cite{MPS} that the matrices can be chosen in a site-independent way, i.\:e.\:
\begin{align*}
\ket{\Psi}=\sum_{i_1,...,i_N=1}^d\trace{A_{i_1}\cdot A_{i_2}\cdot...\cdot A_{i_N}}\ket{i_1...i_N}
\end{align*}
with $D\times D$ matrices $\{A_i\}_{i=1,...d}$. In our consecutive discussion a special class of MPS will be of particular importance. This class is characterized by the following generic condition.
\\
\\
\textbf{Condition (G1):}\\
There is a finite number $L_0$ such that for all $L\geq L_0$ the list of matrices
\begin{align*}
\{A_{i_1}\cdot...\cdot A_{i_{L}}\}_{i_j\in\{1...d\}}
\end{align*}
spans the entire algebra of $D\times D$ matrices.\\
\\
Condition (G1) is generic in the sense that $d$  matrices chosen randomly according to some reasonable measure comply with this condition with probability one. It is not hard to see that (G1) holds iff the map
$$\Gamma_L: X\mapsto \sum_{i_1,...,i_L=1}^d\trace{XA_{i_1} A_{i_2}... A_{i_L}}\ket{i_1...i_L}$$
is injective for $L\geq L_0$. The correspondence between sets $\{A_i\}_{i=1,...d}$ and MPS is not bijective; for example the set $\{XA_iX^{-1}\}_{i=1,...d}$ with invertible $X$ belongs to the same state. It is shown in \cite{MPS},~Chapter~3 that the matrices of any MPS satisfying (G1) can be chosen to constitute a CPU map $\cT$. More precisely, we can choose $\{A_i\}_{i=1,...d}$ such that the map $\cT(X)=\sum_iA_iXA_i^\dagger$ satisfies $\cT(\id)=\id$ and $\cT^*(\Xi)=\Xi$ for some diagonal and strictly positive matrix $\Xi$. In addition, $\id$ is the only fixed point of $\cT$. For a more detailed discussion of MPS we refer to \cite{MPS}.
\subsection{Canonical Parent Hamiltonians}\label{prel:parent}
We consider a TI state $\ket{\Psi}=\sum_{i_1...i_N}\trace{A_{i_1}\cdot...\cdot A_{i_N}}\ket{i_1...i_N}$ of a spin system with PBC on a chain $\Lambda$ of $N$ sites. For fixed $L\in\mathbb{N}$ we define $\cG_L\subset(\mathbb{C}^d)^{\otimes L}$ to be the subspace spanned by the vectors $\ket{\Psi(X)}=\sum_{i_1...i_L}\trace{XA_{i_1}\cdot...\cdot A_{i_L}}\ket{i_1...i_L}$, where $X$ are complex $D\times D$ matrices. Note that if condition (G1) holds for the matrices $A_i$ then for $L\geq L_0$ the space spanned by $\ket{\Psi(X)}$ has dimension $D^2$. We write $h_{\cG_L}$ for the projector onto the orthogonal complement of $\cG_L$ in $(\mathbb{C}^d)^{\otimes L}$. The \textit{canonical parent Hamiltonian} for $\ket{\Psi}$ (and fixed $L$) is defined as the formal expression $H_\Lambda=\sum_i^N\tau^i(h_{\cG_L})$ where $\tau$ denotes the translation operation by one site \cite{MPSwerner,MPS}. For a parent Hamiltonian with nearest neighbour interaction ($L$=2) we will write $H_\Lambda=\sum_k h_{k,k+1}$ to emphasize this fact. It is clear from the definition that $H_\Lambda\ket{\Psi}=0$ and that $H_\Lambda$ is frustration free. Moreover, as a result of condition (G1) $\ket{\Psi}$ is the unique ground state of $H_\Lambda$ if $L> L_0$ and $N\geq2L_0$, \cite[Theorem 10]{MPS}.
More generally, under (G1) $H_\Lambda$ can be shown to have a spectral gap $\gamma>0$ above the ground state energy \cite{MPSwerner,MPS} even in the limit of an \emph{infinite} chain. Let $\Lambda_1\subset\Lambda$ and let $G_{\Lambda_1}$ denote the projector onto the kernel of $H_{\Lambda_1}=\sum_{i:\{i+1,...,i+L\}\subset\Lambda_1}\tau^i(h_{\cG_L})$. The \emph{local gap} is defined to be the largest number $\gamma_{\Lambda_1}$ such that
$$H_{\Lambda_1}\geq\gamma_{\Lambda_1}\left(\id-G_{\Lambda_1}\right).$$
The local gap does not depend on $\Lambda$ but only on the number of sites in $\Lambda_1$. The "Local-Gap condition" of \cite{Stable} refers to the property  of a general frustration-free Hamiltonian that the local gap decays at most polynomially in the number of lattice sites. It is one core assumption for the stability proof for frustration-free Hamiltonians (the other one being LTQO). In \cite{Nacht1,spitz} a constant lower bound on the local gap of one-dimensional, frustration-free Hamiltonians is derived. In particular, this implies that parent Hamiltonians satisfy the Local-Gap condition and we will naturally encounter this fact in our derivation. A more detailed discussions of parent Hamiltonians for MPS can be found in \cite{MPS}.

\subsection{Stability of the spectral gap under quantum Perturbations of classical Hamiltonians}
In this section we recall a fundamental result by D.A.\:Yarotsky \cite{YAR} that asserts the stability of the spectral gap of a classical Hamiltonian under certain local perturbations. The effect of small quantum perturbations to classical Hamiltonians was discussed for example in \cite{YAR2,YAR3,ALA1,KT3}. In \cite{KTHal,YAR} this was extended to perturbations that need not necessarily be small but are required to consist of a small bounded part and a term that is bounded relatively to the unperturbed Hamiltonian. In the following we describe rigorously this perturbation theory.

We start with a chain $\Lambda\subset\mathbb{Z}$ with PBC and we consider a TI frustration-free Hamiltonian $H_\Lambda=\sum_{x\in\Lambda}h_x$. We will call $H_\Lambda$ \emph{classical} if in each space $\cH_x$ there is a preferred vector $\ket{\Omega}_x$ and an orthogonal basis containing that vector such that the product basis in $\cH_{\Lambda_0+x}$ diagonalizes $h_x$. 
%
Furthermore we assume that $H_\Lambda$ has non-degenerate ground state $\ket{\Omega}_\Lambda=\bigotimes_{x\in\Lambda}
\ket{\Omega}_x$ and strictly positive spectral gap above $\ket{\Omega}_\Lambda$. We consider perturbations $\Phi_\Lambda=\sum_{x\in\Lambda}\phi_x$ whose local terms act on finite subchains and that can be split into a purely bounded part $\phi_x^{(b)}$ and a relatively bounded part $\phi_x^{(r)}$ as
\begin{align}
\phi_x=\phi_x^{(r)}+\phi_x^{(b)}\label{seps}.
\end{align}
The bounded part is characterized by 
\begin{align}
\norm{\phi_x^{(b)}}{\infty}\leq\beta.\label{boundedpart}
\end{align}
For the relatively bounded part we suppose that for any $\ket{\psi}$ and any $I\subset\Lambda$
\begin{align}
\Abs{\sum_{x\in I}\bracket{\psi}{\phi_x^{(r)}}{\psi}}\leq\alpha\bracket{\psi}{H_\Lambda}{\psi}.\label{relboundedpart}
\end{align}

\begin{theorem}[{\cite[Theorem 2]{YAR}}]\label{lem:yar} Let ${H}_{\Lambda}=\sum_x h_x$ be a classical Hamiltonian on a chain $\Lambda$ with PBC and non-degenerate gapped ground state $\ket{\Omega}_\Lambda$. Consider the perturbed Hamiltonian $\widetilde{H}_\Lambda=H_{\Lambda}+\Phi$, where $\Phi=\sum_{x}\phi_x$ is a perturbation that satisfies \eqref{seps}-\eqref{relboundedpart}. 
For any $\kappa>1$ there is $\delta(\kappa)>0$ such that for any $\alpha\in(0,1)$ and $\beta=\delta(1-\alpha)^{2\kappa}$ the following conclusions hold:
\begin{enumerate}
\item $\widetilde{H}_\Lambda$ has a non-degenerate gapped ground state $\ket{\widetilde{\Omega}}_\Lambda$:
\begin{align*}
\widetilde{H}_\Lambda\ket{\widetilde{\Omega}}_\Lambda=
\widetilde{E}_\Lambda\ket{\widetilde{\Omega}}_\Lambda
\end{align*}
and for some $\gamma>0$ that does not depend on $\Lambda$
$$\widetilde{H}_\Lambda|_{\cH_\Lambda\ominus\ket{\widetilde{\Omega}}_\Lambda}\geq
(\widetilde{E}_\Lambda\:+\:\gamma)\:\id.$$
\item There exists a thermodynamic weak$^*$-limit of the ground states $\ket{\widetilde{\Omega}}_\Lambda$: For $\Lambda\rightarrow\mathbb{Z}$ one has that
\begin{align*}
\braket{A\widetilde{\Omega}_\Lambda}{\widetilde{\Omega}_\Lambda}\rightarrow\omega(A),\qquad A\in\bigcup_{\abs{\Lambda}<\infty}\cB(\cH_\Lambda),
\end{align*}
where $\cB(\cH_\Lambda)$ denotes the bounded operators on $\cH_\Lambda$.
\item There is an exponential decay of correlations in the infinite volume ground state $\omega$: for $\Lambda_1,\Lambda_2\subset\Lambda$ and some positive $c$ and $\epsilon<1$ it holds that
\begin{align*}
\abs{\omega(A_1A_2)-\omega(A_1)\omega(A_2)}\leq c^{\abs{\Lambda_1}+\abs{\Lambda_2}}\epsilon^{\textnormal{dist}(\Lambda_1,\Lambda_2)}\norm{A_1}{\infty}\norm{A_2}{\infty},\quad A_i\in\cB(\cH_{\Lambda_i}).
\end{align*}
\item If within the allowed range of perturbations the term $\phi_x$ depends analytically on some parameters, then the ground state $\omega$ is also weakly$^*$-analytic in these parameters.
\end{enumerate}
\end{theorem}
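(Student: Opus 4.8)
The statement is quoted from \cite{YAR}, so the plan is to outline the strategy I would follow rather than reproduce the full argument. The idea is to first dispose of the relatively bounded part by a form inequality, which reduces everything to a \emph{bounded} local perturbation of a gapped reference, and then to control that bounded perturbation by a convergent local expansion.

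\textbf{Reducing the relatively bounded part.} My first move is to remove $\Phi^{(r)}$ by a Kato--Lax--Milgram--Nelson-type form bound. Taking $I=\Lambda$ in \eqref{relboundedpart} gives $\Abs{\bracket{\psi}{\Phi^{(r)}}{\psi}}\le\alpha\bracket{\psi}{H_\Lambda}{\psi}$ for every $\ket{\psi}$ with $\alpha<1$, so the form sum $H':=H_\Lambda+\Phi^{(r)}$ is well defined and sandwiched, $(1-\alpha)H_\Lambda\le H'\le(1+\alpha)H_\Lambda$. Since $H_\Lambda\ge0$ with $\ker H_\Lambda=\mathbb{C}\ket{\Omega}_\Lambda$, the sandwich forces $H'\ge0$ and $\bracket{\Omega}{H'}{\Omega}=0$, whence $H'\ket{\Omega}_\Lambda=0$ by positivity; on the orthogonal complement $H'\ge(1-\alpha)H_\Lambda\ge(1-\alpha)\gamma_0$, where $\gamma_0$ is the unperturbed gap. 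Thus $\Phi^{(r)}$ leaves the product ground state $\ket{\Omega}_\Lambda$ \emph{exactly} invariant, keeps it non-degenerate ($\ker H'=\ker H_\Lambda$), and merely shrinks the gap to $(1-\alpha)\gamma_0$. This reduces the theorem to the stability of the gapped reference $H'$, whose ground state is still $\bigotimes_x\ket{\Omega}_x$, under the purely bounded local perturbation $\Phi^{(b)}$ with $\norm{\phi_x^{(b)}}{\infty}\le\beta$.

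\textbf{The bounded part via a convergent local expansion.} Naive perturbation theory is useless here because $\Phi^{(b)}=\sum_x\phi_x^{(b)}$ is extensive and its norm diverges with $\abs{\Lambda}$. I would instead build the perturbed ground state and its gap from the contour representation of the low-energy projector $P=\frac{1}{2\pi i}\oint(z-H'-\Phi^{(b)})^{-1}\d{z}$, with the contour drawn inside the reference gap, expanding the resolvent in powers of $\Phi^{(b)}$ and organising the terms by their connected spatial support (a cluster/polymer expansion, equivalently solvable as a Kirkwood-Thomas fixed point for the ground-state coefficients). Locality of the reference resolvent is guaranteed by a Combes-Thomas bound governed by $(1-\alpha)\gamma_0$: each local factor contributes a weight of order $\beta$, while every energy denominator contributes a factor $((1-\alpha)\gamma_0)^{-1}$. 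The series converges once $\beta$ is small compared with a fixed power of the effective gap, which is exactly the origin of the admissible window $\beta=\delta(1-\alpha)^{2\kappa}$ and yields conclusion~1: $P$ has rank one uniformly in $\Lambda$, so $\widetilde H_\Lambda$ has a non-degenerate ground state $\ket{\widetilde\Omega}_\Lambda$ and a gap $\gamma>0$ independent of $\Lambda$.

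\textbf{Main obstacle, and the remaining conclusions.} The crux is precisely this extensivity: one must extract a bound uniform in $\abs{\Lambda}$ from a perturbation whose global norm is not, which is possible only by playing the smallness of $\beta$ against the combinatorial entropy of clusters of a given size, measured in units of $(1-\alpha)\gamma_0$. Getting the two small parameters to cooperate---$\alpha$ entering only through the effective gap while $\beta$ is the genuine expansion parameter---is what fixes the scaling $\beta\sim(1-\alpha)^{2\kappa}$ and is the step I expect to require the most care. The last three conclusions are then corollaries of the same expansion. For a fixed local observable $A$ the coefficients of $\bracket{\widetilde\Omega_\Lambda}{A}{\widetilde\Omega_\Lambda}$ stabilise once $\Lambda$ contains a neighbourhood of $\mathrm{supp}(A)$ large compared with the convergence length, giving the $\Lambda$-independent weak$^*$ state $\omega$ of conclusion~2; for two observables the connected correlation is carried only by clusters joining $\mathrm{supp}(A_1)$ to $\mathrm{supp}(A_2)$, each unit of separation costing a factor proportional to $\beta/((1-\alpha)\gamma_0)$, which produces the exponential decay of conclusion~3 with $\eps<1$ any number exceeding this ratio; and if the $\phi_x$ depend analytically on parameters within the admissible window, every term of the locally uniformly convergent series is analytic, so by a Vitali/Weierstrass argument $\omega$ is weakly$^*$-analytic, which is conclusion~4.
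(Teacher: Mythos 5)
First, a point of comparison: the paper does not prove this statement at all --- it is imported verbatim as \cite[Theorem~2]{YAR} and used as a black box, so there is no in-paper proof to measure your sketch against. Judged on its own terms, your outline captures the right general flavour (a locally organised, convergent construction of the dressed ground state, with $\alpha$ entering through the effective gap and $\beta$ as the genuine expansion parameter), and your spectral analysis of $H'=H_\Lambda+\Phi^{(r)}$ is correct as far as it goes: the form sandwich $(1-\alpha)H_\Lambda\le H'\le(1+\alpha)H_\Lambda$ does force $H'\ket{\Omega}_\Lambda=0$, non-degeneracy of the kernel, and a gap $\ge(1-\alpha)\gamma_0$.

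The genuine gap is in the hand-off between your two steps. After absorbing $\Phi^{(r)}$ into the reference, $H'$ is no longer \emph{classical}: it is not diagonal in any product basis and its local terms need not commute. The entire expansion machinery you then invoke --- organising the resolvent expansion of $\frac{1}{2\pi i}\oint(z-H'-\Phi^{(b)})^{-1}\,\d{z}$ by connected supports, or the Kirkwood--Thomas fixed point --- lives off the fact that the unperturbed resolvent is diagonal in the product basis, so that each cluster factorises and each denominator is an explicit classical energy. For your $H'$ that structure is gone, and ``locality of the reference resolvent\ldots guaranteed by a Combes--Thomas bound'' does not fill the hole: Combes--Thomas controls off-diagonal decay for one-body-type operators, whereas quasi-locality of the resolvent or spectral projector of a generic gapped \emph{many-body} Hamiltonian is essentially of the same difficulty as the theorem you are trying to prove. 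This is also why the hypothesis \eqref{relboundedpart} is demanded for \emph{every} subset $I\subset\Lambda$ and not just $I=\Lambda$: the relatively bounded part must be controlled region by region \emph{inside} the expansion (each local piece of $\Phi^{(r)}$ appearing as an insertion weighted against the classical energy of the cluster it touches), which is how Yarotsky actually proceeds. By using only the global bound $I=\Lambda$ and then discarding $\Phi^{(r)}$ from the perturbative treatment, you throw away exactly the information that makes the local expansion close. The fix is not to pre-absorb $\Phi^{(r)}$ but to keep $H_\Lambda$ (classical) as the reference and run the fixed-point/cluster argument for $\Phi^{(b)}+\Phi^{(r)}$ jointly, using \eqref{boundedpart} for the former and the local form bounds \eqref{relboundedpart} for the latter; the rest of your outline (conclusions 2--4 as corollaries of locally uniform convergence, Vitali for analyticity) is then sound in spirit.
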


Theorem~\ref{lem:yar} establishes that the spectral gap of a classical Hamiltonian is stable under perturbations that comply with the above assumptions. We will use this result to prove that parent Hamiltonians of MPS have a spectral gap that is stable under sufficiently weak bounded perturbations. To achieve this we will view the MPS parent Hamiltonian as a perturbation of a classical Hamiltonian, which is within a parameter range where Theorem~\ref{lem:yar} applies. The bounded part of this perturbation will decay faster under scaling of the system size than $\delta(1-\alpha)^{2\kappa}$. For sufficiently large systems this implies that under a small bounded perturbation $\phi_x'$ the parent Hamiltonian remains a perturbation of a classical Hamiltonian such that Theorem~\ref{lem:yar} applies. This provides us with the desired perturbation result.

\section{Stability of the spectral gap of a canonical parent Hamiltonian}
In this section we state our main theorem. We consider a MPS that satisfies the generic condition (G1) and prove that the spectral gap of the corresponding parent Hamiltonian is stable under sufficiently weak perturbations. In the following corollary we extend this result and show that our discussion includes D.A.~Yarotsky's perturbation theory for the AKLT model \cite{YAR} as an important special case.
\begin{theorem}\label{thm1}
Let $\ket{\Psi}$ be a TI MPS on a finite ring $\Lambda$ with PBC and suppose that for the matrices of $\ket{\Psi}$ condition (G1) holds. Suppose $N\geq2L_0$ and choose $L> L_0$ and let $H_\Lambda=\sum_i\tau^i(h_{G_L})$ be the canonical parent Hamiltonian for $\ket{\Psi}$. Furthermore let $\Phi_\Lambda=\sum_k\phi_k$ be any finite range interaction with $\norm{\phi_k}{\infty}\leq\beta$ for a sufficiently small $\beta$ depending on the range of $\Phi$. Then all conclusions of Theorem~\ref{lem:yar} hold for the perturbed parent Hamiltonian $\widetilde{H}_\Lambda=H_\Lambda+\Phi_\Lambda$.
\end{theorem}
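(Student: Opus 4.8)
The plan is to realise the perturbed parent Hamiltonian $\widetilde H_\Lambda = H_\Lambda + \Phi_\Lambda$ as a \emph{classical} Hamiltonian plus a perturbation meeting the hypotheses \eqref{seps}--\eqref{relboundedpart}, and then to invoke Theorem~\ref{lem:yar} verbatim. The bridge to the classical world is the renormalisation map $\cT$. Since the matrices of $\ket{\Psi}$ may, by (G1), be chosen so that $\cT$ is CPU with $\id$ as its unique fixed point, the iterate $\cT^n$ converges to the projection onto $\id$ at a rate governed by the modulus $\eps<1$ of the second-largest eigenvalue of $\cT$. First I would group $L$ consecutive sites into a single super-site, so that $H_\Lambda$ becomes a nearest-neighbour Hamiltonian on the coarse-grained chain whose local data are controlled by $\cT^L$. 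The purpose of the coarse-graining is that, as $L$ grows, the overlap structure of the vectors $\ket{\Psi(X)}$ spanning $\cG_L$ factorises across the super-site boundary up to an error of order $\eps^L$.

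Second, I would identify the classical reference $H^{\mathrm{cl}}_\Lambda$ as the coarse-grained parent Hamiltonian of the fixed point of the renormalisation flow, in which $\cT^L$ is replaced by the rank-one projection onto $\id$. For this fixed-point model the local projector is diagonal in a product basis and has a product ground state, so $H^{\mathrm{cl}}_\Lambda$ is classical in the sense required by Theorem~\ref{lem:yar} and carries a spectral gap that is uniform in $\Lambda$ (this is exactly the constant local-gap statement quoted from \cite{Nacht1,spitz}). The quantitative heart of the argument is then to estimate the deviation $D:=H_\Lambda-H^{\mathrm{cl}}_\Lambda$ of the true coarse-grained parent Hamiltonian from its fixed point. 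Using the exponential convergence of $\cT^L$ I expect a local bound of order $\eps^{L}$ on the part of $D$ that overlaps the classical ground subspace, while the remainder of $D$ is supported on the excited sector of $H^{\mathrm{cl}}_\Lambda$ and is therefore dominated by $H^{\mathrm{cl}}_\Lambda$ itself.

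Third, this dichotomy furnishes precisely the splitting demanded by Yarotsky. The piece of $D$ living near the classical ground space is taken as the bounded part, with per-site norm $\norm{\cdot}{\infty}\leq\beta_0(L)$ where $\beta_0(L)=O(\eps^{L})$, and the piece supported on the excited sector is taken as the relatively bounded part, for which $\big|\sum_{x\in I}\bracket{\psi}{\phi_x^{(r)}}{\psi}\big|\leq\alpha\bracket{\psi}{H^{\mathrm{cl}}_\Lambda}{\psi}$ holds with some structural $\alpha<1$. The external perturbation $\Phi_\Lambda$ has per-site bounded terms; after coarse-graining each super-site carries only a fixed number of the original $\phi_k$, so it contributes a further bounded term of per-site norm at most $c\beta$ with $c$ depending only on the range. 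Writing $\widetilde H_\Lambda=H^{\mathrm{cl}}_\Lambda+(D+\Phi_\Lambda)$ we thus obtain a perturbation of $H^{\mathrm{cl}}_\Lambda$ with total bounded part of norm at most $\beta_0(L)+c\beta$ and relative bound $\alpha$.

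Finally I would close by ordering the scales correctly. Fixing $\kappa>1$ and the $\alpha\in(0,1)$ produced above, Theorem~\ref{lem:yar} admits all perturbations whose bounded part lies below $\delta(1-\alpha)^{2\kappa}$; I choose $L$ so large that $\beta_0(L)\leq\tfrac12\delta(1-\alpha)^{2\kappa}$ and then $\beta$ so small that $c\beta\leq\tfrac12\delta(1-\alpha)^{2\kappa}$. The hypotheses of Theorem~\ref{lem:yar} are then satisfied and all four of its conclusions transfer to $\widetilde H_\Lambda$. The step I expect to be the main obstacle is the second one: turning the spectral convergence $\cT^L\to\id$ into genuine operator estimates on the local projectors, and showing that the deviation $D$ separates cleanly into an $O(\eps^L)$ bounded part and an $H^{\mathrm{cl}}_\Lambda$-dominated relatively bounded part with $\alpha<1$ — all while verifying that the coarse-graining preserves frustration-freeness and translation invariance and, crucially, that every constant stays uniform as $\Lambda\rightarrow\mathbb{Z}$.
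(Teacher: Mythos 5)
Your overall route coincides with the paper's: block $L$ sites, use the convergence of $\cT^L$ to $\cT^{\infty}$ to exhibit the blocked parent Hamiltonian as a classical reference plus a bounded part of size $\cO(\abs{\lambda_2}^{L/2})$ and a relatively bounded part, then absorb the external perturbation into the bounded part and invoke Theorem~\ref{lem:yar}. Two steps, however, are asserted rather than achieved. First, the fixed-point model is \emph{not} classical with respect to your coarse-grained sites. The asymptotic Kraus operators are $A^{(\infty)}_{(pq)}=\sqrt{\xi_q}\proj{p}{q}$, and the resulting local ground-space projector on two adjacent blocks is $\idi{A}\otimes\proji{\varphi}{\varphi}{BC}\otimes\idi{D}$ with $\ket{\varphi}=\sum_i\sqrt{\xi_i}\ket{ii}$ \emph{entangled across the block boundary}: the fixed-point state is a dimerized product of such pairs, not a product over blocks, and the local kernel is $D^2$-dimensional. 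Yarotsky's definition of classical demands a preferred vector and a diagonalizing product basis per site, and with the naive blocking no such basis exists. The missing idea is the re-partition into ``half-shifted'' spaces $\cH^{\textnormal{HS}}_{\Lambda_k\cup\Lambda_{k+1}}$ so that each $\ket{\varphi}$ sits inside a single new site; the classical reference is then $H_{\Lambda}^{\textnormal{CL}}=3L\sum_kH^{\textnormal{HS}}_{\Lambda_k\cup\Lambda_{k+1}}$ and the sandwich \eqref{hvbounds} between $\id-G^{(\infty)}_{\Lambda_k\cup\Lambda_{k+1}}$ and sums of three half-shifted projectors is exactly what makes your ``excited-sector'' piece relatively bounded.

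Second, your closing quantifier ordering is circular: you fix $\alpha$ and then let $L$ grow, but $\alpha$ depends on $L$. Since $\norm{H_{k,k+1}}{\infty}$ grows like $L$ while the local gap $\gamma$ stays constant, the relative bound one actually obtains is $\alpha=1-\frac{\gamma}{6L}+\cO(\abs{\lambda_2}^{L/2})\rightarrow 1$, so the admissible bounded perturbation $\delta(1-\alpha)^{2\kappa}$ shrinks like $(\gamma/6L)^{2\kappa}$, i.e.\ polynomially in $L$. There is no fixed ``structural $\alpha<1$''; the argument survives only because the intrinsic bounded part decays \emph{exponentially} in $L$ while the allowed threshold decays only polynomially, which is the comparison that lets you first take $L$ large and then $\beta$ small. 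Finally, note that under (G1) only products of $L_0$ matrices span the algebra, so the core argument applies directly only after taking interaction range $P>L_0$ and blocks of length $L\geq P$, as in Section~\ref{finishproof}; this reduction should be stated rather than left implicit.
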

Note that the above does not apply to important special cases as the AKLT model. There one considers a Hamiltonian with local nearest neighbour interaction but the matrices at each site do not span the whole algebra. The following simple corollary is to remedy this issue.
\begin{corollary}\label{non-generic}
Let $H_\Lambda=\sum_i\tau^i(h_{G_L})$ be a canonical parent Hamiltonian such that Theorem~\ref{thm1} applies.
Consider a Hamiltonian $\hat{H}_\Lambda=\sum_i h_{i,i+1}$ and suppose that there are positive constants $c_1$ and $c_2$ such that
\begin{align*}
c_1\:h_{G_L}\leq\sum_{j=1}^{L-1}h_{j,j+1}\leq c_2\:h_{G_L}.
\end{align*}
Then all conclusions of Theorem~\ref{lem:yar} also hold for $\hat{H}$.
\end{corollary}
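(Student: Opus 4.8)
The plan is to show that the spectral gap stability established for the canonical parent Hamiltonian $H_\Lambda$ in Theorem~\ref{thm1} transfers to $\hat{H}_\Lambda$ by exploiting the operator-norm equivalence of the two interactions. The hypothesis $c_1\,h_{G_L}\leq\sum_{j=1}^{L-1}h_{j,j+1}\leq c_2\,h_{G_L}$ is a \emph{local} comparison on a block of $L$ sites; the first step is to sum over all translates. Writing $\hat{H}_\Lambda=\sum_i h_{i,i+1}$ and grouping the nearest-neighbour terms into overlapping windows of length $L$, I would establish global operator inequalities of the form $c_1'\,H_\Lambda\leq\hat{H}_\Lambda\leq c_2'\,H_\Lambda$ on $\cH_\Lambda$, with constants $c_1',c_2'$ proportional to $c_1,c_2$ and an overlap factor accounting for the fact that each bond $h_{j,j+1}$ appears in several length-$L$ windows. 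Because both Hamiltonians are frustration-free with the \emph{same} ground state $\ket{\Psi}$ (the local comparison forces the kernels to coincide), these inequalities are compatible with a common ground-state energy normalised to zero.

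The second step is to read $\hat{H}_\Lambda$ itself as a perturbation of $H_\Lambda$ that fits the framework of Theorem~\ref{lem:yar} as applied inside the proof of Theorem~\ref{thm1}. The key observation is that the renormalisation argument underlying Theorem~\ref{thm1} represents $H_\Lambda$, on a sufficiently large scale, as a classical Hamiltonian plus a relatively bounded perturbation with a bounded part that decays below the Yarotsky threshold $\delta(1-\alpha)^{2\kappa}$. Writing $\hat{H}_\Lambda=H_\Lambda+(\hat{H}_\Lambda-H_\Lambda)$, the difference $\Phi'=\hat{H}_\Lambda-H_\Lambda$ is controlled by the two-sided bound: from $c_1'\,H_\Lambda\leq\hat{H}_\Lambda\leq c_2'\,H_\Lambda$ one gets $(c_1'-1)H_\Lambda\leq\Phi'\leq(c_2'-1)H_\Lambda$, which is precisely a bound of the relatively-bounded type~\eqref{relboundedpart} with $\alpha=\max\{|c_1'-1|,|c_2'-1|\}$. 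Thus $\Phi'$ has \emph{no} bounded part to speak of (one may take $\beta=0$ for it), and $\hat{H}_\Lambda$ is a relatively bounded perturbation of the same underlying classical Hamiltonian that $H_\Lambda$ was seen to perturb.

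The third step is to verify that the resulting total relative-boundedness constant stays inside the admissible Yarotsky window $\alpha\in(0,1)$, so that $\hat{H}_\Lambda$ (and then $\hat{H}_\Lambda$ plus any small bounded $\Phi_\Lambda$) still satisfies the hypotheses of Theorem~\ref{lem:yar}. Here one adds the relative bound coming from the original renormalisation of $H_\Lambda$ to the contribution $\alpha$ produced by $\Phi'$; provided $c_1,c_2$ keep $\alpha$ below the threshold after combining, all four conclusions of Theorem~\ref{lem:yar} apply verbatim to $\hat{H}_\Lambda$. Since the gap, the thermodynamic limit, and the analyticity statements are stable under such relatively bounded perturbations, they carry over to $\hat{H}_\Lambda$ as claimed.

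I expect the main obstacle to be the bookkeeping in the first step: passing from the single-block comparison to a clean global operator inequality requires carefully counting how many overlapping length-$L$ windows each nearest-neighbour bond belongs to, and ensuring the resulting constants $c_1',c_2'$ are translation-uniform and independent of $\abs{\Lambda}$ so that the thermodynamic-limit conclusions survive. A subtler point is confirming that the combined relative-boundedness constant genuinely remains strictly below $1$; if the comparison constants $c_1,c_2$ are too loose this could fail, so the corollary implicitly needs $c_1,c_2$ close enough to the regime already used for $H_\Lambda$, and making that quantitative dependence explicit is the part I would treat most carefully.
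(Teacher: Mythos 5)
Your overall strategy --- treating $\hat{H}_\Lambda-H_\Lambda$ as a relatively bounded perturbation of $H_\Lambda$ and feeding it into the Yarotsky framework --- is not the paper's route, and it contains a step that fails for the generality in which the corollary is stated. From $c_1'\,H_\Lambda\leq\hat{H}_\Lambda\leq c_2'\,H_\Lambda$ you extract a relative bound with constant $\alpha=\max\{|c_1'-1|,|c_2'-1|\}$, but the corollary imposes no requirement that $c_1,c_2$ be close to $1$: they are arbitrary positive constants, and in the intended application (the AKLT Hamiltonian versus the canonical projector $h_{\cG_3}$) there is no reason for $c_2'<2$. Worse, the relative bound in Theorem~\ref{lem:yar} must be taken with respect to the \emph{classical} Hamiltonian $H_\Lambda^{\textnormal{CL}}$, and the proof of Theorem~\ref{thm1} already uses up almost the entire admissible window: the relative constant there is $\alpha=1-\frac{\gamma}{6L}+\cO(|\lambda_2|^{L/2})$, which is close to $1$. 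Adding a further relative perturbation of size $|c_2'-1|$ pushes the total above $1$ unless $c_2'$ is within $\cO(1/L)$ of $1$. You flag this at the end but resolve it by asserting the corollary ``implicitly needs $c_1,c_2$ close enough''; it does not, so your argument proves only a strictly weaker statement.

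The paper's proof sidesteps the issue entirely by never comparing $\hat{H}_\Lambda$ to $H_\Lambda$ at the level of perturbation theory. It observes that the two-sided local inequality forces the \emph{kernels} of the block Hamiltonians $\hat{H}_{\Lambda_k\cup\Lambda_{k+1}}$ and $H_{\Lambda_k\cup\Lambda_{k+1}}$ to coincide (and, together with the constant local gap of the parent Hamiltonian, gives $\hat{H}_{\Lambda_k\cup\Lambda_{k+1}}$ a constant local gap above the same kernel). Since the entire renormalization argument of Theorem~\ref{thm1} only uses the projectors $G_{\Lambda_k\cup\Lambda_{k+1}}$ onto these kernels, the local gap, and the boundedness $\|H_{k,k+1}\|_\infty\leq\cO(L)$ of the block terms, the whole construction --- the unitaries $W_{\Lambda_k}$, the classical Hamiltonian $H_\Lambda^{\textnormal{CL}}$, and the split into $\phi^{(b)}$ and $\phi^{(r)}$ --- is simply re-run for $\hat{H}_\Lambda$ verbatim, with $c_1,c_2$ entering only as harmless multiplicative constants in the gap and norm estimates. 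If you want to salvage your approach you would have to rescale $\hat{H}_\Lambda$ or absorb the constants differently, but the clean fix is the paper's: the hypothesis is a statement about kernels, not about the size of $\hat{H}_\Lambda-H_\Lambda$.
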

The ground states of the AKLT model are MPS with $\{A_i\}=\{\sigma^z,\sqrt{2}\sigma^+,-\sqrt{2}\sigma^-\}$ \cite{AKLT,MPS}, where the $\sigma$'s are the Pauli matrices. If we choose $\hat{H}$ to be he AKLT Hamiltonian Corollary~\ref{non-generic} applies with $L=3$ and implies the stability of the spectral gap of the AKLT model.
\section{Proof of Stability}

We start this section with an outline of the proof of Theorem~\ref{thm1}. In Section~\ref{teclem} we prove some lemmas from the theory of quantum channels and MPS. The following Subsection~\ref{coreproof} contains a proof of Theorem~\ref{thm1} under the stronger assumption that the matrices $\{A_i\}_{i=1,...,d}$ at each site of the chain span the whole algebra of $D\times D$ matrices. However, this assumption is not necessary and in Section~\ref{finishproof} we extend the previous discussion to prove stability under (G1).
\subsection{Outline of the proof}
For the readers convenience, before we proceed with the derivation of Theorem~\ref{thm1}, we start with an exposition of core observations that will provide us with the proof.
\begin{enumerate}
\item We are given a MPS parent Hamiltonian $H_\Lambda$. We divide $\Lambda$ into subchains $\Lambda_k$ and we consider local sub-Hamiltonians $H_{\Lambda_k\cup\Lambda_{k+1}}$ of $H_\Lambda$ acting on $\Lambda_k\cup\Lambda_{k+1}$. We analyze the behavior of the ground state subspace of $H_{\Lambda_k\cup\Lambda_{k+1}}$ under scaling of $\Lambda_k$. To this end we introduce density matrices $\rho_{\Lambda_k\cup\Lambda_{k+1}}$ whose image subspace is exactly the kernel of $H_{\Lambda_k\cup\Lambda_{k+1}}$.
\item Using a renormalization group flow we construct local unitaries $W_{\Lambda_k}$ such that on sufficiently large scale the image of $W_{\Lambda_k}\otimes W_{\Lambda_{k+1}}\rho_{\Lambda_k\cup\Lambda_{k+1}}
W_{\Lambda_k}^\dagger\otimes W_{\Lambda_{k+1}}^\dagger$ has particularly simple structure. It turns out that in the asymptotic limit of large system size this image corresponds to the ground state subspace of a classical Hamiltonian. 
\item \sloppypar{We use convergence estimates from the theory of quantum Markov chains to show that the projectors $G_{\Lambda_k\cup\Lambda_{k+1}}$ onto the kernel of $H_{\Lambda_k\cup\Lambda_{k+1}}$ and $G_{\Lambda_k\cup\Lambda_{k+1}}^{(\infty)}$ onto the kernel of the asymptotic classical Hamiltonian can be made exponentially close. We prove that
\mbox{{$\norm{W_{\Lambda_k}\otimes W_{\Lambda_{k+1}}G_{\Lambda_k\cup\Lambda_{k+1}} W_{\Lambda_k}^\dagger\otimes W_{\Lambda_{k+1}}^\dagger-G_{\Lambda_k\cup\Lambda_{k+1}}^{(\infty)}}{\infty}\leq \cO(\abs{\lambda_2}^{L/2})$}}.}
\item We provide an explicit perturbation consisting of a bounded part $\sum_k\phi^{(b)}_k$ and a relatively bounded part $\sum_k\phi^{(r)}_k$ that transform the classical Hamiltonian into $\bigotimes_kW_{\Lambda_k}H_{\Lambda}
\bigotimes_{k}W_{\Lambda_k}^{\dagger}$.  Using the estimate from \emph{3.}~we show that these perturbations are in accordance with the conditions of Theorem~\ref{thm1}. When adding a sufficiently small bounded perturbation to $\sum_k\phi^{(b)}_k$ the total perturbation remains in the range where Theorem~\ref{thm1} applies. Hence, the ground state subspace of $H_{\Lambda}$ is stable.
\end{enumerate}
\subsection{Some Lemmas}\label{teclem}
%
%
%
%
%
%
%
%
%
%
%
%
%
%
%

We already mentioned (Section~\ref{prel:MPS}) that to any TI MPS we can associate a certain CPU map $\cT$. To better keep track of the kernel of the canonical parent Hamiltonian it will be useful to introduce the operator $\idx{\rho}{EE'}=\frac{1}{D}\sum_{i_1, i_2, j_1, j_2}\trace{A_{i_1}A_{i_2}A_{j_2}^\dagger A_{j_1}^\dagger}\proji{i_1}{j_1}{E}\otimes\proji{i_2}{j_2}{E'}$, which is defined via the Kraus operators of $\cT$. The subscripts $E$ and $E'$ have no physical significance but are introduced to more conveniently distinguish the systems involved. The following lemma shows that if two CPU maps $\cT$ and $\widetilde{\cT}$ are close, then the corresponding operators $\idx{\rho}{EE'}$
and $\idx{\widetilde{\rho}}{EE'}$ can be made close using a local unitary transformation.
\begin{lemma}\label{lem:two}
Let $\cT(X)=\sum_{i=1}^dA_iX{A_i}^{\dagger}$ and $\widetilde{\cT}(X)=\sum_{i=1}^d\widetilde{A}_iX{\widetilde{A}_i^{\dagger}}$ be CPU maps. Consider the operators $\idx{\rho}{EE'}:=\frac{1}{D}\sum_{i_1, i_2, j_1, j_2=1}^d\trace{A_{i_1}A_{i_2}A_{j_2}^{\dagger}
A_{j_1}^{\dagger}}\proji{i_1}{j_1}{E}\otimes\proji{i_2}{j_2}{E'}$ and $\idx{\widetilde{\rho}}{EE'}:=\frac{1}{D}\sum_{i_1, i_2, j_1, j_2=1}^d\trace{\widetilde{A}_{i_1}\widetilde{A}_{i_2}
\widetilde{A}_{j_2}^{\dagger}\widetilde{A}_{j_1}^{\dagger}}
\proji{i_1}{j_1}{E}\otimes\proji{i_2}{j_2}{E'}$. The following conclusions hold:
\begin{enumerate}
\item The operators $\rho_{EE'}$ and $\widetilde{\rho}_{EE'}$ are positive semidefinite and $\Trace{\rho_{EE'}}=\Trace{\widetilde{\rho}_{EE'}}=1$ (i.e.~they are density operators).
\item There is a local unitary $\idx{U}{E}$ such that
\begin{align*}
\norm{U_E\otimes U_{E'}\:\idx{\rho}{EE'}\:U_E^\dagger\otimes U_{E'}^\dagger-\idx{\widetilde{\rho}}{EE'}}{1}\leq 4d^2\:\norm{\cT-\widetilde{\cT}}{CB}^{1/2},
\end{align*}
where by $\Norm{\cdot}{CB}$ we denote the norm of complete boundedness \cite{Paulsen,Watrous}.
\end{enumerate}
\end{lemma}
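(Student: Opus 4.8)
The plan is to treat the two assertions separately, dealing with positivity and normalization first and then with the perturbative estimate. For the first claim I would collapse the trace over the bond space: since $\trace{A_{i_1}A_{i_2}A_{j_2}^{\dagger}A_{j_1}^{\dagger}}=\sum_{a,b}\bra{a}A_{i_1}A_{i_2}\ket{b}\,\overline{\bra{a}A_{j_1}A_{j_2}\ket{b}}$, the operator factorizes as a positive sum of rank-one terms,
\[
\rho_{EE'}=\frac{1}{D}\sum_{a,b}\ket{v_{ab}}\bra{v_{ab}},\qquad \ket{v_{ab}}:=\sum_{i_1,i_2}\bra{a}A_{i_1}A_{i_2}\ket{b}\,\ket{i_1 i_2}_{EE'},
\]
which is manifestly positive semidefinite (and identically for $\widetilde{\rho}_{EE'}$). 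For the trace I would compute $\Trace{\rho_{EE'}}=\frac{1}{D}\sum_{i_1,i_2}\trace{A_{i_1}A_{i_2}A_{i_2}^{\dagger}A_{i_1}^{\dagger}}$ and collapse it using unitality $\cT(\id)=\sum_i A_iA_i^{\dagger}=\id$ twice, first on the $i_2$-sum and then on the $i_1$-sum, leaving $\Trace{\id}/D=1$.

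For the second claim the central observation is a gauge-covariance identity. A Kraus gauge $A_i\mapsto\sum_j U_{ij}A_j$ by a $d\times d$ unitary $U$ leaves $\cT$ unchanged, and substituting it into the definition of $\rho_{EE'}$ shows that it acts \emph{exactly} as the local unitary $U_E\otimes U_{E'}$ with one and the same $U$ on both factors; this is precisely why the lemma carries a single $U_E$. Hence producing the required unitary is equivalent to finding one gauge $U$ that brings the gauged Kraus set $\{\sum_j U_{ij}A_j\}$ as close as possible to $\{\widetilde{A}_i\}$, and the whole estimate reduces to controlling the single-site discrepancy $\eta:=\sum_i\norm{\sum_j U_{ij}A_j-\widetilde{A}_i}{2}^2$ by $\norm{\cT-\widetilde{\cT}}{CB}$.

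To bound $\eta$ I would invoke the continuity of the Stinespring representation (Kretschmann--Schlingemann--Werner): both maps have $d$ Kraus operators, so their dilations share an environment $\mathbb{C}^d$, and the theorem supplies a unitary $U$ on that environment---which is exactly our $U_E$---with $\norm{\sum_i(A_i'-\widetilde{A}_i)^{\dagger}(A_i'-\widetilde{A}_i)}{\infty}\le\norm{\cT-\widetilde{\cT}}{CB}$, where $A_i':=\sum_j U_{ij}A_j$. Bounding the trace by $D$ times the operator norm gives $\eta\le D\,\norm{\cT-\widetilde{\cT}}{CB}$. I would then propagate this single-site closeness to the two-site operator: viewing $\rho_{EE'}$ as the reduced state of the unit purification $\ket{\psi}=\tfrac{1}{\sqrt D}\sum_{a,b}\ket{v_{ab}}\ket{ab}$, contractivity of the partial trace and the pure-state bound give $\norm{\rho_{EE'}'-\widetilde{\rho}_{EE'}}{1}\le 2\,\norm{\ket{\psi'}-\ket{\widetilde{\psi}}}{}$, and the squared vector distance equals $\tfrac{1}{D}\sum_{i_1,i_2}\norm{A_{i_1}'A_{i_2}'-\widetilde{A}_{i_1}\widetilde{A}_{i_2}}{2}^2$. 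Splitting each product difference telescopically and using that every Kraus operator is a contraction (again from unitality, so $\norm{A_i'}{\infty},\norm{\widetilde{A}_i}{\infty}\le 1$) bounds this by $\tfrac{4d}{D}\eta\le 4d\,\norm{\cT-\widetilde{\cT}}{CB}$; the factor $D$ cancels, and chaining the estimates yields $\norm{\rho_{EE'}'-\widetilde{\rho}_{EE'}}{1}\le 4\sqrt{d}\,\norm{\cT-\widetilde{\cT}}{CB}^{1/2}$, comfortably within the stated $4d^2$.

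The main obstacle is the gauge-matching step. Because Kraus operators are defined only up to a unitary, one cannot compare $\{A_i\}$ and $\{\widetilde{A}_i\}$ directly, and the needed closeness holds only after optimizing over a common gauge. Turning this into a quantitative statement---certifying that a \emph{single} environment unitary simultaneously renders all $d$ Kraus operators close, with the residual discrepancy controlled by the CB-norm rather than by some uncontrolled Kraus-dependent quantity---is the technical heart of the proof, and is exactly what the Stinespring-continuity estimate provides; it is also the source of both the square root and the dimensional prefactor in the final bound. Everything else (the telescoping of the two-site product, the partial-trace contraction, and the tracking of constants) is routine once the common gauge is in hand.
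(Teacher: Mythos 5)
Your proposal is correct and rests on the same key ingredient as the paper's proof: the Kretschmann--Schlingemann--Werner continuity of the Stinespring dilation supplies a single environment unitary, which via Kraus gauge covariance is exactly the local unitary $U_E$ acting identically on both tensor factors of $\rho_{EE'}$. Your propagation step (explicit purification, pure-state trace-norm bound, and telescoping of the Kraus products using $\norm{A_i}{\infty}\leq1$ from unitality) is a cosmetic variant of the paper's route (partial trace of quartic expressions in the Stinespring isometries and telescoping at the isometry level), and it even yields the slightly sharper constant $4\sqrt{d}$ in place of $4d^2$.
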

\begin{proof}
The first assertion of the lemma follows by straightforward computations. For the second assertion we extend the CPU maps ${\cT}$ and $\widetilde{\cT}$ using Stinespring representations ${V}:=\sum_i^{d}A_i^{\dagger}\otimes\ket{i}_{E}$ and $\widetilde{V}:=\sum_i^{d}\widetilde{A}_i^{\dagger}\otimes\ket{i}_{E}$, respectively. Since
\begin{align*}
\cT(\rho)= V^{\dagger}(\rho\otimes\idi{E})V\quad\forall\rho,
\end{align*}
$V$ is indeed a Stinespring extension of $\cT$ with dilation space $\idx{\cH}{E}$. By assumption $\cT$ is unital and thus $V$ is an isometry, i.\:e. ${V}^{\dagger}V=\id$. Moreover, it is not hard to verify that the operator $\idx{\rho}{EE'}$ can be rewritten as
\begin{align*}
\idx{\rho}{EE'}=\left(\frac{1}{D}\:\ptrace{\mathbb{C}^D}{({V}\otimes\idi{E}){V}{V}^{\dagger}({V}^{\dagger}\otimes\idi{E})}\right)^{T},
\end{align*}
where $\ptrace{\mathbb{C}^D}{\cdot}$ denotes the partial trace over the output space of $\cT$ and $(\cdot)^{T}$ denotes transposition with respect to the computational basis.
The corresponding statements hold for the operators $\widetilde{\cT}$, $\widetilde{V}$ and $\idx{\widetilde{\rho}}{EE'}$. To shorten the notation we introduce the isometry $W:=(\id\otimes (U_E)^{T})V$, where $U_E$ denotes a unitary acting on the $E$ subsystem. 
Using the above expression for $\idx{\rho}{EE'}$ and the monotonicity of the Schatten 1-norm under the partial trace, we conclude that
\begin{align*}
&\norm{U_E^\dagger\otimes U_{E'}^\dagger\:(\idx{\rho}{EE'})\:U_E\otimes U_{E'}-\idx{\widetilde{\rho}}{EE'}}{1}\\
&=\norm{(U_E\otimes U_{E'})^{T}\:(\idx{\rho}{EE'})^{T}\:(U_E^\dagger\otimes U_{E'}^\dagger)^{T}-(\idx{\widetilde{\rho}}{EE'})^{T}}{1}\\
&\leq\frac{1}{D}\Norm{({W}\otimes\idi{E}){W}{W}^{\dagger}({W}^{\dagger}\otimes\idi{E})-(\widetilde{V}\otimes\idi{E})\widetilde{V}\widetilde{V}^{\dagger}(\widetilde{V}^{\dagger}\otimes\idi{E})}{1}\\
&\leq d^2 \Norm{({W}\otimes\idi{E}){W}{W}^{\dagger}({W}^{\dagger}\otimes\idi{E})-(\widetilde{V}\otimes\idi{E})\widetilde{V}\widetilde{V}^{\dagger}(\widetilde{V}^{\dagger}\otimes\idi{E})}{\infty}\\
&\leq4d^2\:\Norm{W-\widetilde{V}}{\infty}
\end{align*}
It follows from the continuity of the Stinespring extension (see \cite[Theorem 1]{KSW06}) that the unitary $(U_{E})^{T}$ acting on the dilation space can be chosen such that
\begin{align*}
\norm{W-\widetilde{V}}{\infty}^2=\norm{(\id\otimes (U_E)^{T})V-\widetilde{V}}{\infty}^2\leq\norm{\cT-\widetilde{\cT}}{CB}.
\end{align*}
%
%
%
\end{proof}
As mentioned before the operators $\rho_{EE'}$ will help us to keep track of the behaviour of the kernels of local parent Hamiltonians under scaling. The images of $\rho_{EE'}$ will correspond to the kernels of the Hamiltonians. We write $\idx{P}{EE'}$ and $\idx{\tilde{P}}{EE'}$ for the projectors onto the images of $\idx{\rho}{EE'}$ and $\idx{\tilde{\rho}}{EE'}$. In the following we shall obtain conditions under which the distance of these projectors is small, i.e.~the kernels of the parent Hamiltonians are almost the same.
\begin{lemma}\label{lem:three}
Let $\rho$ and $\tilde{\rho}$ be two Hermitian operators and let $\rho^{-1}$ and $\tilde{\rho}^{-1}$ be their (Moore-Penrose-) pseudo inverses. Let $P=\rho\rho^{-1}$ and $\tilde{P}=\tilde{\rho}\tilde{\rho}^{-1}$ denote the projectors onto the images of $\rho$ and $\tilde{\rho}$. Then for any Schatten p-norm $\norm{\cdot}{p}$ we have that
\begin{align*}
\norm{P-\tilde{P}}{p}&\leq\norm{\rho-\tilde{\rho}}{p}\left(\norm{\rho^{-1}}{\infty}+\norm{\rho^{-2}}{\infty}+\norm{\tilde{\rho}^{-2}}{\infty}+\norm{\rho^{-1}}{\infty}\norm{\tilde{\rho}^{-1}}{\infty}\right).
\end{align*}
\end{lemma}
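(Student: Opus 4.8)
The plan is to reduce the statement to the elementary Moore--Penrose relations for the range projectors together with a single first--order perturbation identity for the pseudo--inverses. For Hermitian $\rho$ one has that $P=\rho\rho^{-1}=\rho^{-1}\rho$ is the orthogonal projector onto the image of $\rho$, and moreover $\rho^{-1}P=\rho^{-1}$, $\rho^{-2}\rho=\rho^{-1}$ and $(\id-P)\rho=0$; the analogous relations hold for $\tilde{\rho}$. Writing $\Delta:=\rho-\tilde{\rho}$ and inserting the cross term $\tilde{\rho}\rho^{-1}$, I would first split
\[
P-\tilde{P}=\rho\rho^{-1}-\tilde{\rho}\tilde{\rho}^{-1}=(\rho-\tilde{\rho})\rho^{-1}+\tilde{\rho}\big(\rho^{-1}-\tilde{\rho}^{-1}\big)=\Delta\rho^{-1}+\tilde{\rho}\big(\rho^{-1}-\tilde{\rho}^{-1}\big).
\]
The first summand is already linear in $\Delta$ and, by H\"older for Schatten norms, contributes $\norm{\Delta}{p}\norm{\rho^{-1}}{\infty}$. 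Since the operators to which the lemma is applied are the density operators $\rho_{EE'}$ of Lemma~\ref{lem:two}, we have $\norm{\tilde{\rho}}{\infty}\le 1$, so the second summand is controlled by $\norm{\rho^{-1}-\tilde{\rho}^{-1}}{p}$, and the whole task reduces to a first--order bound on the difference of the two pseudo--inverses.

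For that difference I would \emph{not} use the naive resolvent identity (which is false here), but start from the product $\rho^{-1}(\tilde{\rho}-\rho)\tilde{\rho}^{-1}=-\rho^{-1}\Delta\tilde{\rho}^{-1}$ and expand it using $\rho^{-1}\rho=P$ and $\tilde{\rho}\tilde{\rho}^{-1}=\tilde{P}$, obtaining $-\rho^{-1}\Delta\tilde{\rho}^{-1}=\rho^{-1}\tilde{P}-P\tilde{\rho}^{-1}$, which rearranges to
\[
\rho^{-1}-\tilde{\rho}^{-1}=-\rho^{-1}\Delta\tilde{\rho}^{-1}+\rho^{-1}(\id-\tilde{P})-(\id-P)\tilde{\rho}^{-1}.
\]
The two correction terms carry the kernel projectors and are not yet manifestly first order, so the key move is to re--express them in $\Delta$--linear form: using $\tilde{\rho}(\id-\tilde{P})=0$ together with $\rho^{-2}\rho=\rho^{-1}$ gives $\rho^{-1}(\id-\tilde{P})=\rho^{-2}\Delta(\id-\tilde{P})$, and using $(\id-P)\rho=0$ gives $-(\id-P)\tilde{\rho}^{-1}=(\id-P)\Delta\tilde{\rho}^{-2}$. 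Hence
\[
\rho^{-1}-\tilde{\rho}^{-1}=-\rho^{-1}\Delta\tilde{\rho}^{-1}+\rho^{-2}\Delta(\id-\tilde{P})+(\id-P)\Delta\tilde{\rho}^{-2}.
\]

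Finally I would apply H\"older together with $\norm{\id-P}{\infty},\norm{\id-\tilde{P}}{\infty}\le1$ to each of the three terms, yielding $\norm{\rho^{-1}-\tilde{\rho}^{-1}}{p}\le\norm{\Delta}{p}\big(\norm{\rho^{-1}}{\infty}\norm{\tilde{\rho}^{-1}}{\infty}+\norm{\rho^{-2}}{\infty}+\norm{\tilde{\rho}^{-2}}{\infty}\big)$; adding the contribution $\norm{\Delta}{p}\norm{\rho^{-1}}{\infty}$ of the first summand reproduces exactly the claimed four--term inequality. The hard part is the second paragraph: for pseudo--inverses the identity $\rho^{-1}-\tilde{\rho}^{-1}=\rho^{-1}(\tilde{\rho}-\rho)\tilde{\rho}^{-1}$ fails as soon as $\mathrm{range}(\rho)\neq\mathrm{range}(\tilde{\rho})$, and it is precisely the kernel--projector corrections---once rewritten as $\rho^{-2}\Delta(\id-\tilde{P})$ and $(\id-P)\Delta\tilde{\rho}^{-2}$---that produce the $\norm{\rho^{-2}}{\infty}$ and $\norm{\tilde{\rho}^{-2}}{\infty}$ terms. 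Getting these corrections right, and converting them to first order in $\Delta$, is the crux; everything else is bookkeeping with submultiplicativity of the Schatten norms.
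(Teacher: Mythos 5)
Your proposal is correct and follows essentially the same route as the paper: the same splitting $P-\tilde P=(\rho-\tilde\rho)\rho^{-1}+\tilde\rho(\rho^{-1}-\tilde\rho^{-1})$, the same three-term identity $\rho^{-1}-\tilde\rho^{-1}=\rho^{-2}(\rho-\tilde\rho)(\id-\tilde P)+(\id-P)(\rho-\tilde\rho)\tilde\rho^{-2}-\rho^{-1}(\rho-\tilde\rho)\tilde\rho^{-1}$ (which the paper simply asserts and you derive), and the same H\"older bookkeeping. You are also right that absorbing the factor $\norm{\tilde\rho}{\infty}$ into the stated constant requires $\norm{\tilde\rho}{\infty}\le1$ --- the paper's proof uses this silently, and your explicit appeal to the fact that the lemma is applied to density operators is the more careful reading.
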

\begin{proof} We rewrite the projectors $P$ and $\tilde{P}$ using $\rho^{-1}$ and $\tilde{\rho}^{-1}$ to conclude that
\begin{align*}
\norm{P-\tilde{P}}{p}&=
\norm{\rho\rho^{-1}-\tilde{\rho}\tilde{\rho}^{-1}-\tilde{\rho}\rho^{-1}+\tilde{\rho}\rho^{-1}}{p}\\
&\leq{\norm{\rho^{-1}}{\infty}\norm{\rho-\tilde{\rho}}{p}+\norm{\tilde{\rho}}{\infty}\norm{\rho^{-1}-\tilde{\rho}^{-1}}{p}}.
\end{align*}
The distance $\norm{\rho^{-1}-\tilde{\rho}^{-1}}{p}$ can be bounded using the fact that
\begin{align*}
\rho^{-1}-\tilde{\rho}^{-1}=\rho^{-2}(\rho-\tilde{\rho})(\id-\tilde{P})+(\id-P)(\rho-\tilde{\rho})\tilde{\rho}^{-2}-\rho^{-1}(\rho-\tilde{\rho})\tilde{\rho}^{-1}.
\end{align*}
Applying the triangle inequality and the H\"{o}lder Inequality yields
\begin{align*}
\norm{\rho^{-1}-\tilde{\rho}^{-1}}{p}\leq\norm{\rho-\tilde{\rho}}{p} \left(\norm{\rho^{-2}}{\infty}+\norm{\tilde{\rho}^{-2}}{\infty}+\norm{\rho^{-1}}{\infty}\norm{\tilde{\rho}^{-1}}{\infty}\right)
\end{align*}
which implies that
\begin{align*}
\norm{P-\tilde{P}}{p}&\leq\norm{\rho-\tilde{\rho}}{p}\left(\norm{\rho^{-1}}{\infty}+\norm{\rho^{-2}}{\infty}+\norm{\tilde{\rho}^{-2}}{\infty}+\norm{\rho^{-1}}{\infty}\norm{\tilde{\rho}^{-1}}{\infty}\right).
\end{align*}
\end{proof}
In our main derivation we will encounter the situation, where $\tilde{\rho}$ is fixed whereas $\rho$ depends on an integer, $\rho=\rho(L)$, and approaches $\tilde{\rho}$ as $L$ goes to infinity. All operators $\rho(L)$ as well as the asymptotic operator $\tilde{\rho}$ will be density operators of the same rank. We write $\mu=\mu(L)$ for the smallest non-zero eigenvalue of $\rho(L)$ and accordingly $\tilde{\mu}$ for smallest non-zero eigenvalue of $\tilde{\rho}$. By Lemma~\ref{lem:three} the convergence behaviour of the projectors $P=P(L)$ towards $\tilde{P}$ is governed by the distance $\norm{\rho-\tilde{\rho}}{p}$ and the largest eigenvalues $1/\mu$ and $1/\tilde{\mu}$ of $\rho^{-1}$ and $\tilde{\rho}^{-1}$. The upper bound for the distance between the projectors $P$ and $\tilde{P}$ obtained from Lemma~\ref{lem:three} depends explicitly on $1/\mu$. However, when $\norm{\rho-\tilde{\rho}}{\infty}$ is small enough it follows from the continuity of eigenvalues that one can replace the dependence on $1/\mu$ by $1/\tilde{\mu}$.
%
%
%
\begin{lemma}\label{lem:three(b)}
Let $\rho$ and $\tilde{\rho}$ be two density matrices of the same rank and let $\tilde{\mu}$ be the smallest positive eigenvalue of $\tilde{\rho}$. If $\norm{\rho-\tilde{\rho}}{\infty}<\tilde{\mu}$ then
\begin{align*}
\norm{P-\tilde{P}}{\infty}\leq\frac{4\norm{\rho-\tilde{\rho}}{\infty}}{(\tilde{\mu}-\norm{\rho-\tilde{\rho}}{\infty})^2}.
\end{align*}
\end{lemma}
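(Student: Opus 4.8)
The plan is to obtain the bound by feeding the general estimate of Lemma~\ref{lem:three}, specialized to $p=\infty$, with eigenvalue bounds that trade the dependence on the (moving) spectrum of $\rho$ for that of the fixed operator $\tilde{\rho}$. Recall that Lemma~\ref{lem:three} gives
\begin{align*}
\norm{P-\tilde{P}}{\infty}\leq\norm{\rho-\tilde{\rho}}{\infty}\left(\norm{\rho^{-1}}{\infty}+\norm{\rho^{-2}}{\infty}+\norm{\tilde{\rho}^{-2}}{\infty}+\norm{\rho^{-1}}{\infty}\norm{\tilde{\rho}^{-1}}{\infty}\right),
\end{align*}
so everything reduces to controlling $\norm{\rho^{-1}}{\infty}=1/\mu$ and $\norm{\rho^{-2}}{\infty}=1/\mu^2$, where $\mu$ is the smallest positive eigenvalue of $\rho$, in terms of $\tilde{\mu}$. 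Writing $\eta:=\norm{\rho-\tilde{\rho}}{\infty}$, the hypothesis $\eta<\tilde{\mu}$ is precisely what will make these replacements legitimate and keep the denominators positive.

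The key step is an eigenvalue-perturbation argument. Let $r$ denote the common rank of $\rho$ and $\tilde{\rho}$, and order the eigenvalues of each decreasingly. By Weyl's perturbation inequality for Hermitian operators, $\abs{\lambda_j(\rho)-\lambda_j(\tilde{\rho})}\leq\eta$ for every index $j$. Because $\tilde{\rho}$ has rank exactly $r$, its $r$-th largest eigenvalue is $\tilde{\mu}$, so $\lambda_r(\rho)\geq\tilde{\mu}-\eta>0$. Here the equal-rank assumption is essential: it forces $\lambda_{r+1}(\rho)=0$, so that no eigenvalue that was zero for $\tilde{\rho}$ can turn into a small positive eigenvalue of $\rho$ and become the true smallest positive one. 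Consequently $\mu=\lambda_r(\rho)\geq\tilde{\mu}-\eta$, which yields $\norm{\rho^{-1}}{\infty}\leq 1/(\tilde{\mu}-\eta)$ and $\norm{\rho^{-2}}{\infty}\leq 1/(\tilde{\mu}-\eta)^2$.

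It then remains to collect terms. Since $\tilde{\rho}$ is a density matrix its eigenvalues lie in $[0,1]$, so $\tilde{\mu}\leq 1$ and hence $\tilde{\mu}-\eta<1$, which gives $1/(\tilde{\mu}-\eta)\leq 1/(\tilde{\mu}-\eta)^2$. Using in addition $\tilde{\mu}\geq\tilde{\mu}-\eta$ to bound $\norm{\tilde{\rho}^{-2}}{\infty}=1/\tilde{\mu}^2$ and $\norm{\rho^{-1}}{\infty}\norm{\tilde{\rho}^{-1}}{\infty}\leq 1/\big((\tilde{\mu}-\eta)\tilde{\mu}\big)$ by $1/(\tilde{\mu}-\eta)^2$, each of the four summands in the parenthesis above is at most $1/(\tilde{\mu}-\eta)^2$. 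Their sum is therefore at most $4/(\tilde{\mu}-\eta)^2$, and multiplying by $\norm{\rho-\tilde{\rho}}{\infty}=\eta$ produces exactly the claimed bound $4\eta/(\tilde{\mu}-\eta)^2$.

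The only genuinely delicate point is the eigenvalue step, and specifically the use of the equal-rank hypothesis to guarantee that $\mu\geq\tilde{\mu}-\eta$; the remaining manipulations are elementary scalar inequalities built on $\tilde{\mu}\leq 1$ and $\tilde{\mu}-\eta>0$. I expect no further obstacles, since Lemma~\ref{lem:three} already supplies the operator-theoretic content and this lemma merely converts its $\rho$-dependent bound into the asymptotically useful form governed by the fixed quantities $\tilde{\mu}$ and $\eta$.
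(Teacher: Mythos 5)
Your proof is correct and follows essentially the same route as the paper: Weyl's perturbation inequality combined with the equal-rank hypothesis to get $\mu\geq\tilde{\mu}-\norm{\rho-\tilde{\rho}}{\infty}$, followed by bounding each of the four terms from Lemma~\ref{lem:three} by $1/(\tilde{\mu}-\norm{\rho-\tilde{\rho}}{\infty})^2$. Your write-up is in fact slightly more explicit than the paper's about why the equal-rank assumption is needed and why the first term can be dominated by the squared denominator.
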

\begin{proof}
An application of Weyl's Perturbation Theorem \cite[Corollary III.2.6]{Bhatia} under exploitation of the fact that $\rho$ and $\tilde{\rho}$ have the same rank shows that $\abs{\mu-\tilde{\mu}}\leq\norm{\rho-\tilde{\rho}}{\infty}$. This yields an upper bound on the operator norm of $\rho^{-1}$:
\begin{align*}
\norm{\rho^{-1}}{\infty}=\frac{1}{\mu}\leq\frac{1}{\tilde{\mu}-\norm{\rho-\tilde{\rho}}{\infty}}.
\end{align*}
We use Lemma~\ref{lem:three} to conclude that
\begin{align*}
\norm{P-\tilde{P}}{\infty}&\leq\norm{\rho-\tilde{\rho}}{\infty}\left(\frac{1}{\tilde{\mu}-\norm{\rho-\tilde{\rho}}{\infty}}+\frac{1}{(\tilde{\mu}-\norm{\rho-\tilde{\rho}}{\infty})^2}+\frac{1}{\tilde{\mu}^2}+\frac{1}{\tilde{\mu}(\tilde{\mu}-\norm{\rho-\tilde{\rho}}{\infty})}\right)\\
&\leq\frac{4 \norm{\rho-\tilde{\rho}}{\infty}}{(\tilde{\mu}-\norm{\rho-\tilde{\rho}}{\infty})^2}.
\end{align*}
\end{proof}
The proof of Theorem~\ref{thm1} relies on a renormalization group technique as introduced in \cite{RG}. We define local Hamiltonians acting on subchains of $\Lambda$. We then group the sites upon which these Hamiltonians act to blocks. The core observation is that the number of matrices required for the representation of the MPS will not increase from a certain point on. On the other hand with each grouping the blocked Hamiltonians \lq\lq{}look more and more classical\rq\rq{}. 
The following lemma is taken from \cite{RG} and describes this blocking procedure more precisely. The consecutive application of this result to larger and larger subchains of $\Lambda$ will be referred to as the \emph{renormalization group flow}.
\begin{lemma}\label{SVD}
Let ${\{A_i\}}_{i=1,...,d}$ be a set of $D\times D$ matrices and consider the set ${\{A_{i_1}\cdot...\cdot A_{i_L}\}}_{i_j=1,...,d}$ of all matrix products formed by matrices from ${\{A_i\}}_{i=1,...,d}$. There is a $d^L\times d^L$ unitary matrix $U$ and matrices $A^{(L)}_m$ with
\begin{align}
\sum_{i_1,...,i_L=1}^{d}{U_{m(i_1...i_L)}A_{i_1}\cdot...\cdot A_{i_L}}=A^{(L)}_m
\end{align}
such that $A^{(L)}_m = 0$ for all $m>\min{\{D^2,d^L\}}$. 
Moreover, it holds that $\cT^L=\cT^{(L)}$, where $\cT^{(L)}$ denotes the CP map with Kraus operators $A^{(L)}_m$.
\end{lemma}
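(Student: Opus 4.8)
The plan is to reduce Lemma~\ref{SVD} to an elementary linear-algebra fact about the span of the length-$L$ products, combined with the standard invariance of a completely positive map under a \emph{square-unitary} recombination of its Kraus operators. The key structural observation is that, although there are $d^L$ products, they all live inside the $D^2$-dimensional matrix algebra, so their span cannot grow indefinitely with $L$; this is exactly what the renormalization idea needs.

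First I would regard each product $A_{i_1}\cdots A_{i_L}$ as a vector in the $D^2$-dimensional Hilbert space $M_D(\mathbb{C})$ with the Hilbert--Schmidt inner product, and form the linear map $\Phi:\mathbb{C}^{d^L}\to M_D(\mathbb{C})$ sending the standard basis vector $e_{(i_1\dots i_L)}$ to $A_{i_1}\cdots A_{i_L}$. Its rank $r$ equals the dimension of the span of the products, which is bounded by the dimension of the target, $D^2$, and trivially by the number of generators, $d^L$; hence $r\le\min\{D^2,d^L\}$. Next I would pick an orthonormal basis $\{v_m\}_{m=1}^{d^L}$ of $\mathbb{C}^{d^L}$ adapted to $\ker\Phi$, with the first $r$ vectors spanning $(\ker\Phi)^\perp$ and the remaining ones spanning $\ker\Phi$, and let $U$ be the $d^L\times d^L$ matrix whose $m$-th row is $v_m$. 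Orthonormality makes $U$ unitary, and setting $A^{(L)}_m:=\sum_{(i)}U_{m(i_1\dots i_L)}A_{i_1}\cdots A_{i_L}=\Phi(v_m)$ gives $A^{(L)}_m=0$ whenever $v_m\in\ker\Phi$, i.e.\ for all $m>r$, and hence a fortiori for all $m>\min\{D^2,d^L\}$. (Equivalently, $U$ may be read off an SVD of the $d^L\times D^2$ coefficient matrix.)

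For the identity $\cT^L=\cT^{(L)}$ I would first note that iterating $\cT(X)=\sum_iA_iXA_i^\dagger$ yields $\cT^L(X)=\sum_{i_1,\dots,i_L}(A_{i_1}\cdots A_{i_L})X(A_{i_1}\cdots A_{i_L})^\dagger$, so $\cT^L$ is precisely the map with the $d^L$ products as Kraus operators. Writing $B_k:=A_{i_1}\cdots A_{i_L}$ for $k=(i_1\dots i_L)$, so that $A^{(L)}_m=\sum_k U_{mk}B_k$ with $U$ square unitary, a direct expansion gives
\begin{align*}
\sum_m A^{(L)}_m\, X\,(A^{(L)}_m)^\dagger=\sum_{k,l}\Big(\sum_m \overline{U_{ml}}\,U_{mk}\Big)B_k\,X\,B_l^\dagger=\sum_{k,l}(U^\dagger U)_{lk}\,B_k\,X\,B_l^\dagger=\sum_k B_k\,X\,B_k^\dagger,
\end{align*}
where $\sum_m\overline{U_{ml}}U_{mk}=(U^\dagger U)_{lk}=\delta_{kl}$ and the vanishing operators $A^{(L)}_m$ with $m>r$ contribute nothing. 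The right-hand side is $\cT^L(X)$, so $\cT^{(L)}=\cT^L$.

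I expect no serious obstacle here, as the lemma is a structural fact quoted from \cite{RG}; the only points requiring care are bookkeeping. One must correctly identify that the relevant span sits inside the $D^2$-dimensional algebra $M_D(\mathbb{C})$ (this is what caps the number of nonzero Kraus operators independently of $L$ and is the crux of the renormalization step), and one must verify that the recombination matrix $U$ is genuinely square-unitary, since the Kraus calculation relies on $U^\dagger U=\id$ rather than merely on $U$ being an isometry.
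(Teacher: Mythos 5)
Your proposal is correct and takes essentially the same route as the paper: the paper performs an SVD of the $d^L\times D^2$ coefficient matrix whose rows are the vectorized products, which is exactly your choice of a unitary adapted to $\ker\Phi$, and the second claim is the same unitary-freedom-of-Kraus-operators computation you spell out. Your version is just slightly more explicit in writing out the $\sum_m\overline{U_{ml}}U_{mk}=\delta_{kl}$ step that the paper summarizes as "invariant under unitary summations."
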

\begin{proof} We write $(A_{i_1}\cdot...\cdot A_{i_L})_{\alpha,\beta}$ with $\alpha,\beta\in\{1,...,D\}$ for the entry of the matrix $A_{i_1}\cdot...\cdot A_{i_L}$ in row $\alpha$ and column $\beta$. 
Let $\tilde{A}$ be the $d^L\times D^2$ matrix which has the entry $(A_{i_1}\cdot...\cdot A_{i_L})_{\alpha,\beta}$ in its $(i_1...i_L)$-th row and $(\alpha,\beta)$-th column. We perform a singular value decomposition of $\tilde{A}$ writing
\begin{align*}
\tilde{A}_{(i_1...i_L),(\alpha\beta)}=\sum_{l=1}^{\min{(D^2,d^L)}}{(U^\dagger)_{(i_1...i_L),l}\:\sigma_{l}\:V_{l,(\alpha\beta)}},
\end{align*}
where by $\sigma_l$ we denote singular values. For the $m$-th row of $U\tilde{A}$, $(U\tilde{A})^{(m)}$, it holds that  
\begin{align*}
(U\tilde{A})^{(m)}=\begin{cases}\sigma_m V^{(m)}\ &;\ m\leq\min{\{d^L,D^2\}} \\ 0\ &;\ m>\min{\{d^L,D^2\}} \end{cases}.
\end{align*}
The rows of the matrix $U\tilde{A}$ correspond to the matrices $A^{(L)}_i$ and thus the first assertion of the lemma follows.
\\
For the second assertion simply observe that for any $X$ the quantity
$$\cT^L(X)=\sum_{i_i,...,i_L=1}^{d} A_{i_1}\cdot...\cdot A_{i_L}X{A}_{i_L}^\dagger\cdot...\cdot {A}_{i_1}^\dagger$$
is invariant under unitary summations i.e.
$$\cT^L(X)=\sum_{m}A_m^{(L)}X(A_m^{(L)})^\dagger=\cT^{(L)}(X).$$
\end{proof}
%
%

In the following lemma we analyse the asymptotic behaviour of the renormalization group flow and show that at large scale a generic TI MPS \lq\lq{}looks classical\rq\rq{}. To achieve this, we consider large powers of the CPU map associated to the MPS and prove that the corresponding Kraus operators have a certain structure. It is well known that condition (G1) implies that the peripheral spectrum of $\cT$ is trivial i.e.~$1$ is the only eigenvalue of $\cT$ whose magnitude is one \cite{MPS,MPSwerner,Wielandt}. 
%
%
\begin{lemma}\label{lem:five}
Let $\cT(X)=\sum_i A_iXA_i^\dagger$ be a CPU map such that $1$ is the unique eigenvalue of magnitude one and suppose that $\Xi=\textnormal{diag}(\xi_1,...,\xi_n)$ with $\xi_i>0$ is the corresponding fixed point of $\cT^{*}$.
Then the following conclusions hold:
\begin{enumerate}
\item The limit $\cT^{\infty}:=\lim_{n\rightarrow\infty}\cT^n$ exists and we can write $\cT^{\infty}(X)=\sum_{i=1}^{D^2}A_i^{(\infty)}X(A_i^{(\infty)})^{\dagger}$ with matrices {$A^{(\infty)}_{(pq)}=\sqrt{\xi_q}\proj{p}{q}$} and $p,q\in\{1,...,D\}$.
\item The projector $P_{EE'}^{(\infty)}$ onto the image of
$$\rho_{EE'}^{(\infty)}:=\frac{1}{D}\sum_{i_1, i_2, j_1, j_2=1}^{D^2}\Trace{A_{i_1}^{(\infty)}
A_{i_2}^{(\infty)}\left(A_{j_2}^{(\infty)}\right)^{\dagger}
\left(A_{j_1}^{(\infty)}\right)^\dagger}
\proji{i_1}{j_1}{E}\otimes\proji{i_2}{j_2}{E'}$$
can be written as
\begin{align*}
P_{EE'}^{(\infty)}= \idi{A}\otimes\proji{\varphi}{\varphi}{BC}\otimes\idi{D},
\end{align*}
where $\ket{\varphi}=\sum_{i} \sqrt{\xi_i} \ket{ii}$, each of the subsystems $A,B,C,D$ is isomorphic to $\mathbb{C}^D$, and $E=AB$, $E'=CD$.
\end{enumerate}
\end{lemma}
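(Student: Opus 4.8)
The plan is to establish part~(1) by a direct spectral analysis of $\cT^n$ and then derive part~(2) by a routine computation of traces of products of the limiting Kraus operators $A^{(\infty)}$. For part~(1), I would first use the hypotheses to pin down the structure of $\cT^\infty$. Since $\cT$ is CPU with $\cT(\id)=\id$ and $\cT^*(\Xi)=\Xi$, and since $1$ is the unique eigenvalue of magnitude one (which holds under (G1)), the map $\cT$ has a trivial peripheral spectrum and a one-dimensional fixed-point space. Standard Perron--Frobenius theory for completely positive maps (see \cite{MPS,Wielandt}) then guarantees that the powers $\cT^n$ converge to the projection onto the fixed-point space. Concretely, I expect $\cT^\infty(X) = \Trace{\Xi X}\,\id$, because $\id$ spans the fixed-point space of $\cT$, $\Xi$ spans that of $\cT^*$, and these are dual under the Hilbert--Schmidt inner product (normalized so that $\Trace{\Xi}=1$). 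The remaining task is to produce \emph{Kraus operators} realizing this limit map. Writing $\cT^\infty(X)=\Trace{\Xi X}\id=\sum_{p,q}\xi_q\,\proj{p}{p}\,X_{qq}$, one reads off that the operators $A^{(\infty)}_{(pq)}=\sqrt{\xi_q}\proj{p}{q}$ reproduce $\cT^\infty$, since $\sum_{p,q}A^{(\infty)}_{(pq)}X(A^{(\infty)}_{(pq)})^\dagger=\sum_{p,q}\xi_q\,\proj{p}{q}X\proj{q}{p}=\sum_q \xi_q X_{qq}\sum_p\proj{p}{p}=\Trace{\Xi X}\id$, as required.

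For part~(2), the strategy is to substitute the explicit form of $A^{(\infty)}_{(pq)}=\sqrt{\xi_q}\proj{p}{q}$ into the definition of $\rho_{EE'}^{(\infty)}$ and compute the trace factor. Writing each index as a pair, $i_1=(p_1q_1)$, etc., the trace $\Trace{A^{(\infty)}_{i_1}A^{(\infty)}_{i_2}(A^{(\infty)}_{j_2})^\dagger(A^{(\infty)}_{j_1})^\dagger}$ becomes a product of $\sqrt{\xi}$ factors times a trace of a product of rank-one operators $\proj{p}{q}$, which collapses via the orthogonality relations $\braket{q}{p'}=\delta_{qp'}$ to a product of Kronecker deltas. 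Carrying out this bookkeeping, I expect $\rho_{EE'}^{(\infty)}$ to factor as $\idi{A}\otimes\proj{\varphi}{\varphi}_{BC}\otimes\idi{D}$ up to normalization, with $\ket{\varphi}=\sum_i\sqrt{\xi_i}\ket{ii}$ emerging from the pairing of the two \lq\lq{}inner\rq\rq{} indices. Since this operator is already a (scaled) projector on the $BC$ factor tensored with identities, the projector $P_{EE'}^{(\infty)}$ onto its image is exactly $\idi{A}\otimes\proj{\varphi}{\varphi}_{BC}\otimes\idi{D}$, which is the claimed form.

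The main obstacle is part~(1): justifying the existence of the limit $\cT^\infty$ and identifying it cleanly. The subtlety is that a general CP map need not be diagonalizable, so convergence of $\cT^n$ does not follow from eigenvalue magnitudes alone --- one must rule out nontrivial Jordan blocks on the peripheral spectrum. Here the triviality of the peripheral spectrum (only the eigenvalue $1$, with a one-dimensional eigenspace spanned by $\id$) together with the fact that $\cT$ is a contraction in an appropriate $\Xi$-weighted norm forces the eigenvalue $1$ to be semisimple, so the Jordan block at $1$ is trivial and all other eigenvalues have magnitude strictly less than one; the powers then converge to the spectral projection onto the fixed-point space at a rate controlled by $\abs{\lambda_2}$, the second-largest eigenvalue magnitude. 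Once this spectral picture is secured, the identification of $\cT^\infty$ and the construction of its Kraus operators are straightforward, and part~(2) reduces to the bookkeeping sketched above.
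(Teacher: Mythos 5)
Your proposal is correct, and for part (1) it follows the paper's own route: trivial peripheral spectrum, convergence of $\cT^n$ to the spectral projection onto the fixed point $\id$, identification $\cT^\infty(X)=\Trace{\Xi X}\,\id$, and a direct Kraus verification of $A^{(\infty)}_{(pq)}=\sqrt{\xi_q}\proj{p}{q}$. You are in fact more careful than the paper on one point: you explicitly rule out a nontrivial Jordan block at the eigenvalue $1$ (via contractivity in a $\Xi$-weighted norm), whereas the paper's proof only remarks that the subdominant eigenvalues decay and the limit "simply converges" to the projector. For part (2) you take a genuinely different, more computational route. The paper never computes $\rho_{EE'}^{(\infty)}$ explicitly; instead it observes that the image of $\rho_{EE'}^{(\infty)}$ is spanned by the vectors $\ket{\mu^{(\infty)}(X)}=\sum_{i_1i_2}\Trace{XA^{(\infty)}_{i_1}A^{(\infty)}_{i_2}}\ket{i_1i_2}$, rewrites these as $(\id\otimes\sqrt{\Xi}X)_{AD}\ket{\omega}_{AD}\ket{\varphi}_{BC}$, and checks that the rank-$D^2$ operator $\idi{A}\otimes\proj{\varphi}{\varphi}_{BC}\otimes\idi{D}$ fixes them all. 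Your delta-function bookkeeping is equally valid and yields strictly more information (the full operator, not just its support), but note one small inaccuracy: carrying out the computation gives $\rho_{EE'}^{(\infty)}=\frac{1}{D}\,\idi{A}\otimes\proj{\varphi}{\varphi}_{BC}\otimes\Xi_D$, i.e.\ the $D$-factor is $\Xi$ rather than $\idi{D}$. Since $\Xi$ is strictly positive this does not change the image, and the projector onto the image is still $\idi{A}\otimes\proj{\varphi}{\varphi}_{BC}\otimes\idi{D}$ as claimed, but you should state that step (full rank of $\Xi$) explicitly rather than asserting that $\rho_{EE'}^{(\infty)}$ itself is a scaled projector.
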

\begin{proof}
All eigenvalues of a CPU map are contained in the closed unit disc in the complex plane. By assumption $\cT$ has only one eigenvalue on the boundary and this eigenvalue is $1$. Those eigenvalues of $\cT^n$, which are contained in the open unit disc decay with increasing $n$, while $1$ is an eigenvalue of $\cT^n$ for any $n$. Hence, $\lim_{n\rightarrow\infty}\cT^n$ simply converges to the projector onto the eigenvector $\id$ corresponding to the eigenvalue $1$ of $\cT$. 
The fact that $A^{(\infty)}_{(pq)}=\sqrt{\xi_q}\proj{p}{q}$ is then straight forward since the dual map $(\cT^{*})^{\infty}$ acts as $(\cT^{*})^{\infty}(X)=\trace{X}\Xi$.

It follows from the first assertion of the lemma and the fact that $\{A_{i}^{(\infty)}\}_{i=1,...,D^2}$ span the entire matrix algebra that the vectors
$\ket{\mu^{(\infty)}(X)}=\sum_{i_1i_2=1}^{D^2}\trace{XA^{(\infty)}_{i_1}A^{(\infty)}_{i_2}}\ket{i_1i_2}$ span the image of $\rho_{EE'}^{(\infty)}$. Furthermore they can be written as
\begin{align*}
\ket{\mu^{(\infty)}(X)}=(\id\otimes\sqrt{\Xi}X)_{AD}\keti{\omega}{AD}\keti{\varphi}{BC},
\end{align*}
where $\ket{\omega}_{AD}=\sum_i\ket{ii}_{AD}$.
Observe that $P_{EE'}^{(\infty)}$ as defined in the lemma has rank $D^2$ and $P_{EE'}^{(\infty)}\ket{\mu^{(\infty)}(X)}=\ket{\mu^{(\infty)}(X)}$. Therefore $P_{EE'}^{(\infty)}$ projects onto the image of $\rho_{EE'}^{(\infty)}$.
\end{proof}
\subsection{The core argument}\label{coreproof}
%
%

In this subsection we consider the stability of the spectral gap of a parent Hamiltonian with nearest neighbour interaction $H_\Lambda=\sum_kh_{k,k+1}$. We prove that the spectral gap is stable under the assumption that at each site $\{A_i\}_{i=1,...,d}$ span the entire algebra of $D\times D$ matrices.
In the following subsections we extend this argument to show that stability holds more generally for generic MPS in the sense of (G1).
\begin{proof}[Proof of stability (Theorem~\ref{thm1}) under strong assumptions]
We show that at large scale the parent Hamiltonian $H_\Lambda$ is a perturbation of a classical model and apply Theorem~\ref{lem:yar} to obtain the perturbation result. For this we divide $\Lambda$ into subchains $\Lambda_k$ of length $L$ and group the terms of $H_\Lambda$ into Hamiltonians $H_{\Lambda_k\cup\Lambda_{k+1}}:=\sum_{j:\{j,j+1\}\subset\Lambda_k\cup\Lambda_{k+1}}h_{j,j+1}$ acting locally on $\cH_{\Lambda_k\cup\Lambda_{k+1}}$ such that
\begin{align*}
H_\Lambda=\frac{1}{2}\sum_{k}\left(H_{\Lambda_k\cup\Lambda_{k+1}}+h_{kL,kL+1}\right).
\end{align*}
For notational convenience we shall abbreviate $H_{k,k+1}:=\frac{1}{2}\left(H_{\Lambda_k\cup\Lambda_{k+1}}+h_{kL,kL+1}\right)$. Clearly it holds that
\begin{align*}
\textnormal{Kern}\:H_{k,k+1}=\textnormal{Kern}\:H_{\Lambda_k\cup\Lambda_{k+1}}
\end{align*}
and that
\begin{align*}
H_{k,k+1}\geq\frac{1}{2} H_{\Lambda_k\cup\Lambda_{k+1}}.
\end{align*}
We introduce the density matrix
\begin{align*}
\rho_{\Lambda_k\cup\Lambda_{k+1}}:=\frac{1}{D}\sum_{i_1....i_{2L}=1\atop{j_1...j_{2L}=1}}^{d}
\trace{A_{i_1}\cdot...\cdot A_{i_{2L}} A_{j_{2L}}^\dagger\cdot...\cdot A_{j_{1}}^\dagger}\proj{i_1...i_{2L}}{j_1...j_{2L}}.
\end{align*}
By assumption the matrices $\{A_i\}_{i=1,...,d}$ span the entire matrix algebra. Hence, for any $L$ the image of $\rho_{\Lambda_k\cup\Lambda_{k+1}}$ is spanned by the $D^2$-dimensional manifold of vectors
\begin{align*}
\ket{\mu(X)}=\sum_{i_1...i_{2L}=1}^{d}\trace{XA_{i_1}\cdot...\cdot A_{i_{2L}}}\ket{i_1....i_{2L}},
\end{align*}
where $X$ is a $D\times D$ matrix with complex entries (see Section~\ref{prel:MPS}). On the other hand these vectors exactly span the kernel of $H_{\Lambda_k\cup\Lambda_{k+1}}$ (see Section~\ref{prel:parent} and \cite{MPS}) and we obtain
\begin{align*}
\textnormal{Im}\:\rho_{\Lambda_k\cup\Lambda_{k+1}}=\textnormal{Kern}\:H_{\Lambda_k\cup\Lambda_{k+1}}.
\end{align*}
%
%
%
The local Hamiltonians $H_{\Lambda_k\cup\Lambda_{k+1}}$ have a positive spectral gap (see also Section~\ref{prel:parent}). Let $G_{\Lambda_k\cup\Lambda_{k+1}}$ denote the projector onto $\textnormal{Kern}\:H_{\Lambda_k\cup\Lambda_{k+1}}$ then there is a $\gamma>0$ that does not depend on $L$ such that
\begin{align}
H_{\Lambda_k\cup\Lambda_{k+1}}\geq\gamma(\id-G_{\Lambda_k\cup\Lambda_{k+1}})\label{gapped}.
\end{align}
%
%
An application of Lemma~\ref{SVD} shows that there is a unitary $U_{\Lambda_k}$ acting non-trivially on $\cH_{\Lambda_k}$ only, with the property that
\begin{align}
U_{\Lambda_k}\otimes U_{\Lambda_{k+1}}\:\rho_{\Lambda_k\cup\Lambda_{k+1}}\:U_{\Lambda_k}^\dagger\otimes U_{\Lambda_{k+1}}^\dagger=\left(\begin{array}{c|c}
\:\rho_{EE'}^{(L)}\:&\:0\:
\\ \hline
\:0\:&\:0\: 
\end{array}\right),\label{crucialSVD}
\end{align}
where
\begin{align*}
\rho_{EE'}^{(L)}:=\frac{1}{D}\sum_{i_1i_2=1\atop{j_1j_2=1}}^{\min{\{D^2,d^L\}}}
\Trace{A_{i_1}^{(L)}A_{i_2}^{(L)}(A_{j_2}^{(L)})^\dagger (A_{j_1}^{(L)})^\dagger}\proji{i_1}{j_1}{E}\otimes\proji{i_2}{j_2}{E'}
\end{align*}
and the matrices $A_{i_j}^{(L)}$ are as in Lemma~\ref{SVD}. The matrix $U_{\Lambda_k}\otimes U_{\Lambda_{k+1}}\:\rho_{\Lambda_k\cup\Lambda_{k+1}}\:U_{\Lambda_k}^\dagger\otimes U_{\Lambda_{k+1}}^\dagger$ acts on a space that is isomorphic to $(\mathbb{C}^{d})^{\otimes L}\otimes(\mathbb{C}^{d})^{\otimes L}$ but only the action on a $\left(\min{\{D^2,d^L\}}\right)^2$ dimensional subspace is non-zero. In the sequel we shall assume that $L$ is chosen large such that $\rho_{EE'}^{(L)}$ acts on a $(D^2)^2$ dimensional space. For any given $L$ we fix this space and define the matrix $\rho_{\Lambda_k\cup\Lambda_{k+1}}^{(\infty)}$ by replacing $\rho_{EE'}^{(L)}$ in that space by $\rho_{EE'}^{(\infty)}$ i.e.
\begin{align*}
\rho_{\Lambda_k\cup\Lambda_{k+1}}^{(\infty)}=
\left(\begin{array}{c|c}
\:\rho_{EE'}^{(\infty)}\:&\:0\:
\\ \hline
\:0\:&\:0\: 
\end{array}\right).
\end{align*}
We denote by $G_{\Lambda_k\cup\Lambda_{k+1}}^{(\infty)}$ the projector onto the image of $\rho_{\Lambda_k\cup\Lambda_{k+1}}^{(\infty)}$. Note that since the orientation of the $(D^2)^2$ dimensional subspace in $(\mathbb{C}^d)^{\otimes L}
\otimes(\mathbb{C}^d)^{\otimes L}$ can depend on $L$ it follows that $\rho_{\Lambda_k\cup\Lambda_{k+1}}^{(\infty)}$ and $G_{\Lambda_k\cup\Lambda_{k+1}}^{(\infty)}$ can depend on $L$.\\

We will now discuss the asymptotic properties of the matrices $\rho_{\Lambda_k\cup\Lambda_{k+1}}^{(L)}$.
We will prove that with a suitable unitary transformation acting locally on the spaces $\cH_{\Lambda_k}$ and with $L$ chosen large the operators 
$\rho_{\Lambda_k\cup\Lambda_{k+1}}^{(L)}$ and $\rho_{\Lambda_k\cup\Lambda_{k+1}}^{(\infty)}$ can be made arbitrarily close. This will provide us with an explicit unitary acting locally on (sufficiently large) spaces $\cH_{\Lambda_k}$ that transforms the kernel of $H_{\Lambda_k\cup\Lambda_{k+1}}$ into a shape determined by $\rho_{\Lambda_k\cup\Lambda_{k+1}}^{(\infty)}$.

Let us consider the CPU map $\cT$ associated with the MPS $\ket{\Psi}$ and let $\lambda_2$ denote its largest in magnitude subdominant eigenvalue. We note that $\sup_{k\geq0}\Norm{\cT^k}{CB}=1$ i.e.~$\cT$ is power-bounded with respect to the $CB$-norm and constant $1$. In this situation \cite[Theorem 3.3/ Theorem 4.3]{szrewo} applies and yields an estimate for the convergence of $\cT^L$ to its stationary behaviour,
\begin{align*}
\norm{\cT^L-\cT^\infty}{CB}\leq C\abs{\lambda_2}^L.
\end{align*}
Here, $C\leq K L^{D-1}$ with $K$ that does not depend on $L$ and in generic cases $C$ does not depend on $L$, too~\cite[Theorem 4.3]{szrewo}. By Lemma~\ref{SVD} this estimate is equivalent to 
\begin{align*}
\norm{\cT^{(L)}-\cT^{(\infty)}}{CB}\leq C\abs{\lambda_2}^L,
\end{align*}
where the maps $\cT^{(L)}$ are defined in the lemma. 
%
%
We apply Lemma~\ref{lem:two} 
to conclude that there is a unitary $V_E$ such that
\begin{align*}
\norm{V_E\otimes V_{E'}\:\upp{\idx{\rho}{EE'}}{L}\:V_E^\dagger\otimes V_{E'}^\dagger-\upp{\idx{\rho}{EE'}}{\infty}}{\infty}\leq4{D^4} \sqrt{C}\abs{\lambda_2}^{L/2}.
\end{align*}
By Lemma~\ref{lem:three(b)} it holds for $L$ chosen sufficiently large that
\begin{align}\label{tskfr}
\norm{V_E\otimes V_{E'}\:\upp{\idx{P}{EE'}}{L}\:V_E^\dagger\otimes V_{E'}^\dagger-\upp{\idx{P}{EE'}}{\infty}}{\infty}\leq\frac{16{D^4} \sqrt{C}\abs{\lambda_2}^{L/2}}{({\mu}-4{D^4} \sqrt{C}\abs{\lambda_2}^{L/2})^2},
\end{align}
where $\mu$ is the smallest non-zero eigenvalue of $\rho_{EE'}^{(\infty)}$. A straight forward computation shows that in fact $\mu$ equals the smallest eigenvalue of the fixed point matrix $\Xi$, see Lemma~\ref{lem:five} for the definition of $\Xi$. 

Taken together, the inequalities \eqref{tskfr} and \eqref{crucialSVD} imply that the projectors onto the images of $\rho_{\Lambda_k\cup\Lambda_{k+1}}$ and $\rho_{\Lambda_k\cup\Lambda_{k+1}}^{(\infty)}$ can be made exponentially close with a local unitary operation: There is a unitary $W_{\Lambda_k}$ such that
\begin{align}
\norm{W_{\Lambda_k}\otimes W_{\Lambda_{k+1}}G_{\Lambda_k\cup\Lambda_{k+1}} W_{\Lambda_k}^\dagger\otimes W_{\Lambda_{k+1}}^\dagger-G_{\Lambda_k\cup\Lambda_{k+1}}^{(\infty)}}{\infty}\leq \frac{16{D^4} \sqrt{C}\abs{\lambda_2}^{L/2}}{({\mu}-4{D^4} \sqrt{C}\abs{\lambda_2}^{L/2})^2}.
\label{projdistance}
\end{align}
In terms of the Hamiltonians $H_{\Lambda_k\cup\Lambda_{k+1}}$ this means that we have achieved to construct a unitary acting locally on spaces $\cH_{\Lambda_k}$ that on sufficiently large scale transforms the ground state space of $H_{\Lambda_k\cup\Lambda_{k+1}}$ into a certain subspace determined by $G_{\Lambda_k\cup\Lambda_{k+1}}^{(\infty)}$.
In the next step we construct a classical Hamiltonian with this ground state subspace. For each $L$ the structure of the operators $G_{\Lambda_k\cup\Lambda_{k+1}}^{(\infty)}$ is known from Lemma~\ref{lem:five}. We have that
\begin{align*}
G_{\Lambda_k\cup\Lambda_{k+1}}^{(\infty)}= \left(\begin{array}{c|c}
\:\idi{A}\otimes\proji{\varphi}{\varphi}{BC}\otimes\idi{D}\:&\:0\:
\\ \hline
\:0\:&\:0\: 
\end{array}\right)
\end{align*}
with $\ket{\varphi}=\sum_{i} \sqrt{\xi_i} \ket{ii}$. Thus $G_{\Lambda_k\cup\Lambda_{k+1}}^{(\infty)}$ induces a natural decomposition of $\cH_{\Lambda_k\cup\Lambda_{k+1}}$ into a subspace $\cH_{X}$ on which $G_{\Lambda_k\cup\Lambda_{k+1}}^{(\infty)}$ acts as the zero operator and a subspace which is isomorphic to $\mathbb{C}^{D^2}\otimes\mathbb{C}^{D^2}$. The latter can further be decomposed according to the structure of $G_{\Lambda_k\cup\Lambda_{k+1}}^{(\infty)}$ into $\mathbb{C}^{D^2}\otimes\mathbb{C}^{D^2}\cong\mathbb{C}^{D}_{A}\otimes\mathbb{C}^{D}_{B}\otimes\mathbb{C}^{D}_{C}\otimes\mathbb{C}^{D}_{D}$. By an additional decomposition of $\cH_X$ and choosing $L$ even we achieve the decomposition
\begin{align*}
\cH_{\Lambda_k\cup\Lambda_{k+1}}\cong(\mathbb{C}^{D}_{A}\oplus\cH_{X_A})\otimes(\mathbb{C}^{D}_{B}\oplus\cH_{X_B})\otimes(\mathbb{C}^{D}_{C}\oplus\cH_{X_C})\otimes(\mathbb{C}^{D}_{D}\oplus\cH_{X_D}).
\end{align*}
Here the spaces $\cH_{X_A},...,\cH_{X_D}$ are chosen to have dimension $d^{L/2}-D$. In the decomposition of $\cH_{\Lambda_k\cup\Lambda_{k+1}}$ we identify the ``half-shifted'' spaces $\cH_{\Lambda_k\cup\Lambda_{k+1}}^{\textnormal{HS}}
:=(\mathbb{C}^{D}_{B}\oplus\cH_{X_B})\otimes(\mathbb{C}^{D}_{C}\oplus\cH_{X_C})$. Figure~\ref{decompose} shows a schematic representation of this decomposition.
Note that $\cH_{\Lambda_k\cup\Lambda_{k+1}}^{\textnormal{HS}}
\cong\cH_{\Lambda_k}$ and that the following inclusions hold:
\begin{align*}
\cH_{\Lambda_k\cup\Lambda_{k+1}}^{\textnormal{HS}}
\subset\cH_{\Lambda_k\cup\Lambda_{k+1}}
\subset\cH_{\Lambda_{k-1}\cup\Lambda_{k}}
^{\textnormal{HS}}\otimes
\cH_{\Lambda_k\cup\Lambda_{k+1}}
^{\textnormal{HS}}
\otimes
\cH_{\Lambda_{k+1}\cup\Lambda_{k+2}}^{\textnormal{HS}}.
\end{align*}
Let $H_{\Lambda_k\cup\Lambda_{k+1}}^{\textnormal{HS}}$ denote the projector in $\cH_{\Lambda_k\cup\Lambda_{k+1}}^{\textnormal{HS}}$ onto the orthogonal complement of $\ket{\varphi}$. The above inclusions translate into the estimates
\begin{align}
H_{\Lambda_k\cup\Lambda_{k+1}}^{\textnormal{HS}}\leq\id-G_{\Lambda_k\cup\Lambda_{k+1}}^{(\infty)}\leq H_{\Lambda_{k-1}\cup\Lambda_{k}}^{\textnormal{HS}}
+H_{\Lambda_k\cup\Lambda_{k+1}}^{\textnormal{HS}}
+H_{\Lambda_{k+1}\cup\Lambda_{k+2}}^{\textnormal{HS}}.\label{hvbounds}
\end{align}

\begin{figure}
\centering
\begin{tikzpicture}[block/.style = {align=center, anchor=center, minimum height=1cm, text width=1.5cm, inner sep=1}, brace/.style={decoration={brace},    decorate}]

\tikzstyle{abstract}=[rectangle, draw=black, rounded corners, line width=0.5mm, minimum  width=.5cm, minimum height=2cm, anchor=center, text width=.5cm]
\tikzstyle{abstracth}=[rectangle, draw=black, rounded corners, line width=0.5mm, minimum  width=.7cm, minimum height=0.7cm, anchor=center, text width=.5cm]
\tikzstyle{txx}=[rectangle, line width=0.0mm, minimum width=2cm, minimum height=0.3cm, text centered, anchor=center, text width=1cm]
\tikzstyle{brox}=[rectangle, draw=black, line width=0.2mm, minimum  width=4cm, minimum height=3cm]

\node[txx,scale=0.7] (dec) at (5.75+5.25,0.35) {$\bullet\hskip0.23cm \bullet\hskip0.23cm\bullet$};

\node[txx,scale=0.7] (dec) at (5.75-5.25,0.35) {$\bullet\hskip0.23cm \bullet\hskip0.23cm\bullet$};

\node[brox] (dec) at (5.75,0.35) {};

\node[abstract] (dec) at (2,0) {};
\node[abstracth] (dec) at (2,1.35) {};

\node[abstract] (dec) at (3,0) {};
\node[abstracth] (dec) at (3,1.35){};

\node[abstract] (dec) at (4.25,0) {};
\node[scale=0.9, text centered] (dec) at (4.27,0) {$\cH_{X_{A}}$};
\node[abstracth] (dec) at (4.25,1.35) {$\mathbb{C}^D_A$};

\node[abstract] (dec) at (5.25,0) {};
\node[scale=0.9, text centered] (dec) at (5.27,0) {$\cH_{X_{B}}$};
\node[abstracth] (dec) at (5.25,1.35) {$\mathbb{C}^D_B$};

\node[abstract] (dec) at (6.25,0) {};
\node[scale=0.9, text centered] (dec) at (6.27,0) {$\cH_{X_{C}}$};
\node[abstracth] (dec) at (6.25,1.35) {$\mathbb{C}^D_C$};

\node[abstract] (dec) at (7.25,0) {};
\node[scale=0.9, text centered] (dec) at (7.27,0) {$\cH_{X_{D}}$};
\node[abstracth] (dec) at (7.25,1.35) {$\mathbb{C}^D_D$};

\node[abstract] (dec) at (8.5,0) {};
\node[abstracth] (dec) at (8.5,1.35) {};

\node[abstract] (dec) at (9.5,0) {};
\node[abstracth] (dec) at (9.5,1.35) {};


\node[block,anchor=center] (nop) at
   (1.5,-2.25) {\large $\cH_{\Lambda_{k-2}\cup\Lambda_{k-1}}$};

\node[block,anchor=center] (nop) at
   (5.75,-2.25) {\large $\cH_{\Lambda_{k}\cup\Lambda_{k+1}}$};
   
\node[block,anchor=center] (nop) at
   (10.00,-2.25) {\large $\cH_{\Lambda_{k+2}\cup\Lambda_{k+3}}$};


\node[block,anchor=center] (nop) at
   (5.75,2.4) {\large $\cH^{\textnormal{HS}}_{\Lambda_{k}\cup\Lambda_{k+1}}$};

\node[block,anchor=center] (nop) at
   (3.75,2.4) {\large $\cH^{\textnormal{HS}}_{\Lambda_{k-1}\cup\Lambda_{k}}$};

\node[block,anchor=center] (nop) at
   (7.75,2.4) {\large $\cH^{\textnormal{HS}}_{\Lambda_{k+1}\cup\Lambda_{k+2}}$};


\node[block,anchor=center] (nop) at
   (2.5,-1.65) {\large $\cH_{\Lambda_{k-1}}$};
   
\node[block,anchor=center] (nop) at
   (4.75,-1.65) {\large $\cH_{\Lambda_{k}}$};
   
\node[block,anchor=center] (nop) at
   (6.75,-1.65) {\large $\cH_{\Lambda_{k+1}}$};

\node[block,anchor=center] (nop) at
   (9,-1.65) {\large $\cH_{\Lambda_{k+2}}$};


\node[block,anchor=center,text width=0] (bA) at
   (3.5,-1.35) {};
\node[block,anchor=center,text width=0] (bE) at
   (-.5,-1.35) {};
   
\node[block,anchor=center,text width=0] (cA) at
   (7.75,-1.35) {};
\node[block,anchor=center,text width=0] (cE) at
   (3.75,-1.35) {};
   
\node[block,anchor=center,text width=0] (dA) at
   (12,-1.35) {};
\node[block,anchor=center,text width=0] (dE) at
   (8,-1.35) {};

   
\node[block,anchor=center,text width=0] (bAO) at
   (3.4,-0.75) {};
\node[block,anchor=center,text width=0] (bEO) at
   (1.6,-0.75) {};

\node[block,anchor=center,text width=0] (cEO) at
   (3.85,-0.75) {};
\node[block,anchor=center,text width=0] (cAO) at
   (5.65,-0.75) {};

\node[block,anchor=center,text width=0] (dEO) at
   (5.85,-0.75) {};
\node[block,anchor=center,text width=0] (dAO) at
   (7.65,-0.75) {};
   
\node[block,anchor=center,text width=0] (eEO) at
   (8.1,-0.75) {};
\node[block,anchor=center,text width=0] (eAO) at
   (9.9,-0.75) {};


\node[block,anchor=center,text width=0] (AOO) at
   (4.85,2.5) {};
\node[block,anchor=center,text width=0] (BOO) at
   (6.65,2.5) {};

\node[block,anchor=center,text width=0] (COO) at
   (3.65+1,2.5) {};
\node[block,anchor=center,text width=0] (DOO) at
   (3.65-1,2.5) {};

\node[block,anchor=center,text width=0] (EOO) at
   (7.85+1,2.5) {};
\node[block,anchor=center,text width=0] (FOO) at
   (7.85-1,2.5) {};


\draw [brace] (bA.south) -- (bE.south);
\draw [brace] (cA.south) -- (cE.south);
\draw [brace] (dA.south) -- (dE.south);

\draw [brace] (bAO.south) -- (bEO.south);
\draw [brace] (cAO.south) -- (cEO.south);
\draw [brace] (dAO.south) -- (dEO.south);
\draw [brace] (eAO.south) -- (eEO.south);

\draw [brace] (AOO.south) -- (BOO.south);
\draw [brace] (DOO.south) -- (COO.south);
\draw [brace] (FOO.south) -- (EOO.south);



\end{tikzpicture}

\caption{Scheme of Hilbert space structure of a segment of the grouped spin chain. Boxes with round corners depict Hilbert spaces; separated boxes are tensored while merged boxes mean a direct sum. Large rectangular box in the middle shows decomposition of $\cH_{\Lambda_k\cup\Lambda_{k+1}}$ into four subspaces. Half-shifted spaces $\cH^{\textnormal{HS}}$ are identified at top of the scheme. Dots on left- and right-hand side denote periodic continuation of Hilbert space structure.\label{decompose}}
\end{figure}
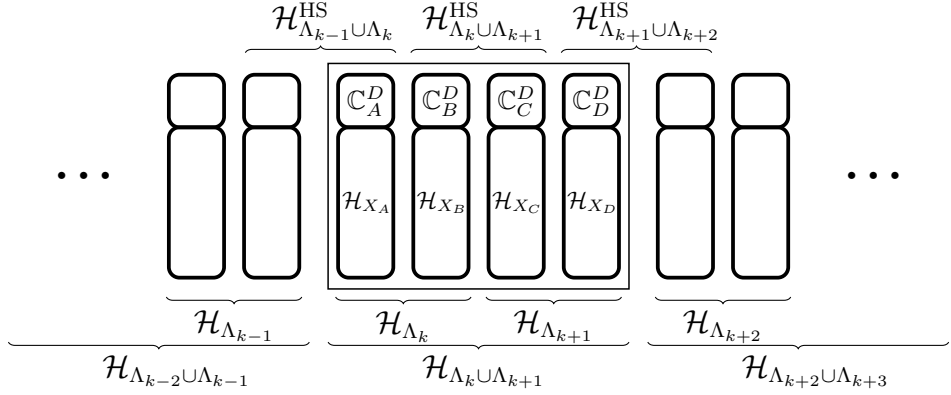

Consider the operator
\begin{align*}
H_{\Lambda}^{\textnormal{CL}}:=3L\sum_{k}H_{\Lambda_k\cup\Lambda_{k+1}}^{\textnormal{HS}}.
\end{align*}
This operator is classical in the sense of Theorem~\ref{lem:yar} with respect to the half-shifted spaces $H_{\Lambda_k\cup\Lambda_{k+1}}^{\textnormal{HS}}$. We claim that for $L$ chosen large enough $(\bigotimes_k W_{\Lambda_k})H_\Lambda(\bigotimes_k W_{\Lambda_k})^\dagger$ is a perturbation of $H_{\Lambda}^{\textnormal{CL}}$ satisfying the assumptions of Theorem~\ref{lem:yar}. We construct this perturbation explicitly. It consists of a bounded part
\begin{align*}
\phi^{(b)}_{k,k+1}:=&W_{\Lambda_k}\otimes W_{\Lambda_{k+1}}(\id-G_{\Lambda_k\cup\Lambda_{k+1}})H_{k,k+1}(\id-G_{\Lambda_k\cup\Lambda_{k+1}})W_{\Lambda_k}^\dagger\otimes W_{\Lambda_{k+1}}^\dagger\\
&-(\id-G_{\Lambda_k\cup\Lambda_{k+1}}^{(\infty)}) W_{\Lambda_{k}}\otimes W_{\Lambda_{k+1}} H_{k,k+1}W_{\Lambda_k}^\dagger\otimes W_{\Lambda_{k+1}}^\dagger (\id-G_{\Lambda_k\cup\Lambda_{k+1}}^{(\infty)})
\end{align*}
and a relatively bounded part
\begin{align*}
\phi^{(r)}_{k,k+1}:=&(\id-G_{\Lambda_k\cup\Lambda_{k+1}}^{(\infty)}) W_{\Lambda_{k}}\otimes W_{\Lambda_{k+1}} H_{k,k+1}W_{\Lambda_k}^\dagger\otimes W_{\Lambda_{k+1}}^\dagger (\id-G_{\Lambda_k\cup\Lambda_{k+1}}^{(\infty)})\\
&-L\:(H_{\Lambda_{k-1}\cup\Lambda_{k}}^{\textnormal{HS}}+
H_{\Lambda_k\cup\Lambda_{k+1}}^{\textnormal{HS}}+
H_{\Lambda_{k+1}\cup\Lambda_{k+2}}^{\textnormal{HS}}).
\end{align*}
Taking both together yields
\begin{align*}
(\bigotimes_k W_{\Lambda_k})H_\Lambda (\bigotimes_k W_{\Lambda_k})^\dagger=H_{\Lambda}^{\textnormal{CL}}+\sum_k\phi^{(b)}_{k,k+1}+\sum_k\phi^{(r)}_{k,k+1}.
\end{align*}
First we estimate
\begin{align*}
\norm{\phi^{(b)}_{k,k+1}}{\infty}=
&\Big|\!\Big|W_{\Lambda_k}\otimes W_{\Lambda_{k+1}}(\id-G_{\Lambda_k\cup\Lambda_{k+1}})H_{k,k+1}(\id-G_{\Lambda_k\cup\Lambda_{k+1}})W_{\Lambda_k}^\dagger\otimes W_{\Lambda_{k+1}}^\dagger\\
&-(\id-G_{\Lambda_k\cup\Lambda_{k+1}}^{(\infty)}) W_{\Lambda_{k}}\otimes W_{\Lambda_{k+1}} H_{k,k+1}W_{\Lambda_k}^\dagger\otimes W_{\Lambda_{k+1}}^\dagger (\id-G_{\Lambda_k\cup\Lambda_{k+1}}^{(\infty)})\Big|\!\Big|_{\infty}\\
&\leq\Norm{H_{k,k+1}(\id-G_{\Lambda_k\cup\Lambda_{k+1}})W_{\Lambda_k}^\dagger\otimes W_{\Lambda_{k+1}}^\dagger-H_{k,k+1}W_{\Lambda_k}^\dagger\otimes W_{\Lambda_{k+1}}^\dagger (\id-G_{\Lambda_k\cup\Lambda_{k+1}}^{(\infty)})}{\infty}\\
&+\Norm{W_{\Lambda_k}\otimes W_{\Lambda_{k+1}}(\id-G_{\Lambda_k\cup\Lambda_{k+1}})H_{k,k+1}-(\id-G_{\Lambda_k\cup\Lambda_{k+1}}^{(\infty)}) W_{\Lambda_{k}}\otimes W_{\Lambda_{k+1}} H_{k,k+1}}{\infty}\\
&\leq2\:\Norm{W\otimes W G_{\Lambda_k\cup\Lambda_{k+1}} W^\dagger\otimes W^\dagger-G_{\Lambda_k\cup\Lambda_{k+1}}^{(\infty)}}{\infty}\:\Norm{H_{k,k+1}}{\infty}\\
&\leq\frac{32L{D^4} \sqrt{C}\abs{\lambda_2}^{L/2}}{({\mu}-4{D^4} \sqrt{C}\abs{\lambda_2}^{L/2})^2}.
\end{align*}
The last inequality makes use of \eqref{projdistance} and the fact that $\norm{H_{k,k+1}}{\infty}\leq L$. Thus we have shown that the norm of $\phi^{(b)}_{k,k+1}$ decays exponentially fast with increasing size of the blocks $\Lambda_k$.\\
To verify that $\phi^{(r)}_{x}$ is in accordance with the conditions of Theorem~\ref{lem:yar}  we need to estimate $\abs{\sum_{x\in I}\phi^{(r)}_{x}}$
for any $I\subset\{1,...,N/L\}$. The maximum is attained when $I=\Lambda$ since
\begin{align*}
\phi^{(r)}_{k,k+1}&\leq L\:(\id-G_{\Lambda_k\cup\Lambda_{k+1}}^{(\infty)})-L\:(H_{\Lambda_{k-1}\cup\Lambda_{k}}^{\textnormal{HS}}+
H_{\Lambda_k\cup\Lambda_{k+1}}^{\textnormal{HS}}+
H_{\Lambda_{k+1}\cup\Lambda_{k+2}}^{\textnormal{HS}})\\
&\leq0,
\end{align*}
where the second inequality makes use of \eqref{hvbounds}. A lower bound on $\phi^{(r)}_{k,k+1}$ follows from the gappedness of $H_{k,k+1}$ \eqref{gapped}:
\begin{align*}
&(\id-G_{\Lambda_k\cup\Lambda_{k+1}}^{(\infty)}) W_{\Lambda_{k}}\otimes W_{\Lambda_{k+1}} H_{k,k+1}W_{\Lambda_k}^\dagger\otimes W_{\Lambda_{k+1}}^\dagger (\id-G_{\Lambda_k\cup\Lambda_{k+1}}^{(\infty)})\geq\\
&\frac{\gamma}{2}\:(\id-G_{\Lambda_k\cup\Lambda_{k+1}}^{(\infty)}) W_{\Lambda_{k}}\otimes W_{\Lambda_{k+1}} (\id-G_{\Lambda_k\cup\Lambda_{k+1}})W_{\Lambda_k}^\dagger\otimes W_{\Lambda_{k+1}}^\dagger (\id-G_{\Lambda_k\cup\Lambda_{k+1}}^{(\infty)})\geq\\
&\frac{\gamma}{2}\:\left(\id-G_{\Lambda_k\cup\Lambda_{k+1}}^{(\infty)}\right) \left(\id-G_{\Lambda_k\cup\Lambda_{k+1}}^{(\infty)}-\frac{16{D^4} \sqrt{C}\abs{\lambda_2}^{L/2}}{({\mu}-4{D^4} \sqrt{C}\abs{\lambda_2}^{L/2})^2}\cdot\id\right) \left(\id-G_{\Lambda_k\cup\Lambda_{k+1}}^{(\infty)}\right)\geq\\
&\frac{\gamma}{2}\:\left(1-\frac{16{D^4} \sqrt{C}\abs{\lambda_2}^{L/2}}{({\mu}-4{D^4} \sqrt{C}\abs{\lambda_2}^{L/2})^2}\right)\:\left(\id-G_{\Lambda_k\cup\Lambda_{k+1}}^{(\infty)}\right)\geq
\frac{\gamma}{2}\:\left(1-\frac{16{D^4} \sqrt{C}\abs{\lambda_2}^{L/2}}{({\mu}-4{D^4} \sqrt{C}\abs{\lambda_2}^{L/2})^2}\right)
\:H_{\Lambda_k\cup\Lambda_{k+1}}^{\textnormal{HS}}.
\end{align*}
We sum the terms $\phi^{(r)}_{k,k+1}$ to conclude that
\begin{align*}
&\sum_{k,k+1}\phi^{(r)}_{k,k+1}\geq\\
&\sum_{k,k+1}\left(\frac{\gamma}{2}\:\left(1-\frac{16{D^4} \sqrt{C}\abs{\lambda_2}^{L/2}}{({\mu}-4{D^4} \sqrt{C}\abs{\lambda_2}^{L/2})^2}\right)
\:H_{\Lambda_k\cup\Lambda_{k+1}}^{\textnormal{HS}}-L\:(H_{\Lambda_{k-1}\cup\Lambda_{k}}^{\textnormal{HS}}+
H_{\Lambda_k\cup\Lambda_{k+1}}^{\textnormal{HS}}+
H_{\Lambda_{k+1}\cup\Lambda_{k+2}}^{\textnormal{HS}})\right)\\
&=\left(-1+\frac{\gamma}{6L}-\frac{8\gamma{D^4} \sqrt{C}\abs{\lambda_2}^{L/2}}{3L({\mu}-4{D^4} \sqrt{C}\abs{\lambda_2}^{L/2})^2}\right)
\:H_{\Lambda}^{\textnormal{CL}}. 
\end{align*}
Thus for Theorem~\ref{lem:yar} we have that
\begin{align*}
\Abs{\sum_{k,k+1}\bracket{\psi}{\phi^{(r)}_{k,k+1}}{\psi}}\leq\alpha\:\bracket{\psi}{H_{\Lambda}^{\textnormal{CL}}}{\psi}
\end{align*}
with $\alpha=(1-\frac{\gamma}{6L}+\cO(\abs{\lambda_2}^{\frac{L}{2}}))$ and $\beta=\delta\:(\frac{\gamma}{6L}-\cO(\abs{\lambda_2}^{\frac{L}{2}}))^{2\kappa}$, where the constants $\delta$ and $\kappa$ still have to be chosen appropriately. As long as $\gamma$ decays sub-exponentially fast with $L$, for $L$ sufficiently large $\norm{\phi^{(b)}_{k,k+1}}{\infty}\leq\beta$ holds. For parent Hamiltonians, which have a constant local gap this is certainly the case.

Applying Theorem~\ref{lem:yar} we could recover the well-known fact that $H_\Lambda$ has a gapped ground state. However, the conditions of Theorem~\ref{lem:yar} are ``open'' in the sense that adding sufficiently small bounded perturbation to $\phi^{(b)}_{k,k+1}$ still results in a total perturbation, which is within the range where Theorem~\ref{lem:yar} can be applied. This provides us with a perturbation result for Hamiltonians in the neighbourhood of $H_\Lambda$. More precisely, let $\Phi':=\sum_{k,k+1}\phi_{k,k+1}'$ be a finite range interaction with $\norm{\phi_{k,k+1}'}{\infty}\leq\beta'$ and $\beta'>0$ small enough. We analyse the spectral gap of $H_\Lambda'=H_\Lambda+\Phi'$. Suppose for the moment that $\phi_{k,k+1}'$ acts exactly on $\cH_{\Lambda_k\cup\Lambda_{k+1}}$ and let
\begin{align*}
\phi_{k,k+1}'':=W_{\Lambda_k}\otimes W_{\Lambda_{k+1}}\phi_{k,k+1}' W_{\Lambda_k}^\dagger\otimes W_{\Lambda_{k+1}}^\dagger.
\end{align*}
Consider the Hamiltonian
\begin{align*}
(\bigotimes_k W_{\Lambda_k})H_\Lambda (\bigotimes_k W_{\Lambda_k})^\dagger+\sum_{k}\phi_{k,k+1}''=(\bigotimes_k W_{\Lambda_k})(H_\Lambda+\Phi') (\bigotimes_k W_{\Lambda_k})^\dagger.
\end{align*}
If $\beta'>0$ is chosen sufficiently small Theorem~\ref{lem:yar} applies and proves the stability of the spectral gap of $H_\Lambda+\Phi'$. In general, though, we want to allow an arbitrary (finite) interaction range for $\phi_{k,k+1}$. If $\phi_{k,k+1}$ acts nontrivially on a subchain of $\Lambda_k\cup\Lambda_{k+1}$ only it is possible to group the $\phi_{k,k+1}$ terms in such a way that in total one gets a finite range interaction on $\Lambda_k\cup\Lambda_{k+1}$. Choosing $\beta'$ we make sure that the grouped perturbation is sufficiently small for an application of Theorem~\ref{lem:yar}. On the other hand if the perturbation has interaction range exceeding the subchain $\Lambda_k\cup\Lambda_{k+1}$ one simply chooses $L$ larger and the previous discussion applies to the larger subchains.
\end{proof}
\subsection{Proof of Theorem~\ref{thm1} and Corollary~\ref{non-generic}}\label{finishproof}
\begin{proof}[Proof of Theorem~\ref{thm1}]
The proof is a simple upgrade of the restricted discussion of the previous subsection. By condition (G1) there is finite $P_0$ such that the matrices $\{A_{i_1}\cdot...\cdot A_{i_{P_0}}\}$ span the whole algebra of $D\times D$ matrices. Hence, $H_\Lambda=\sum_i\tau^i(h_{G_P})$ has a unique ground state  for any $P>P_0$, see Section~\ref{prel:parent}. We proceed as in the proof of the theorem and divide $\Lambda$ into chains $\Lambda_k$ of length $L$. In addition we assume that the chains are sufficiently large to support $h_{\cG_P}$ i.\:e.\: $L\geq P$.
We define the operators 
\begin{align*}
H_{\Lambda_k\cup\Lambda_{k+1}}:=\sum_{i : \{i+1,...,i+P\}\subset\Lambda_k\cup\Lambda_{k+1}}\tau^i(h_{\cG_P}),
\end{align*}
which are sums of all the translates of $h_{\cG_P}$ that act locally on $\Lambda_k\cup\Lambda_{k+1}$.
There are $P-1$ terms in the above Hamiltonian that partially act on block $\Lambda_k$ and partially on $\Lambda_{k+1}$. We define the operators $H_{k,k+1}$ by adding these terms to $H_{\Lambda_k\cup\Lambda_{k+1}}$. Formally
\begin{align*}
H_{k,k+1}= \frac{1}{2}H_{\Lambda_k\cup\Lambda_{k+1}}+\frac{1}{2}\sum_{i: (i+1\in\Lambda_k\ \wedge\ i+P\in\Lambda_{k+1})} \tau^i(h_{\cG_P}).
\end{align*}
As before, we have the properties
\begin{align*}
&H_{k,k+1}\geq H_{\Lambda_k\cup\Lambda_{k+1}},\\
&\textnormal{Kern}(H_{k,k+1})=\textnormal{Kern}(H_{\Lambda_k\cup\Lambda_{k+1}})
\end{align*}
and
\begin{align*}
H_{\Lambda}=\sum_k H_{k,k+1}.
\end{align*}
The kernel of $H_{\Lambda_k\cup\Lambda_{k+1}}$ is given by the image (see also \cite[Section~4.1.1]{MPS}) of
\begin{align*}
\rho_{\Lambda_k\cup\Lambda_{k+1}}=\sum_{i_1....i_{2L}=1\atop{j_1...j_{2L}=1}}^{d}\trace{A_{i_1}\cdot...\cdot A_{i_{2L}}A_{j_{2L}}^\dagger\cdot...\cdot A_{j_{1}}^\dagger}\proj{i_1...i_{2L}}{j_1...j_{2L}}.
\end{align*}
As before, the spectral gap of $H_{\Lambda_k\cup\Lambda_{k+1}}$ can be lower bounded by some constant. With $G_{\Lambda_k\cup\Lambda_{k+1}}$ and $G_{\Lambda_k\cup\Lambda_{k+1}}^{(\infty)}$ defined as in Subsection~\ref{coreproof} the derivation follows the same lines as before. Hence, stability follows under condition (G1), which completes the proof of Theorem~\ref{thm1}.
%
\end{proof}
\begin{proof}[Proof of Corollary~\ref{non-generic}]%
As before we choose $L\geq P$ and divide $\Lambda$ into subchains of length $L$. The restrictions of $\hat{H}_\Lambda=\sum_jh_{j,j+1}$ and $H_\Lambda=\sum_i\tau^i(h_{G_P})$ to $\Lambda_k\cup\Lambda_{k+1}$ are given by
$\hat{H}_{\Lambda_k\cup\Lambda_{k+1}}=
\sum_{\{j,j+1\}\subset\Lambda_k\cup\Lambda_{k+1}}h_{j,j+1}$ and $H_{\Lambda_k\cup\Lambda_{k+1}}=\sum_{i : \{i+1,...,i+P\}\subset\Lambda_k\cup\Lambda_{k+1}}\tau^i(h_{\cG_P})$. The condition $c_1\:h_{G_P}\leq\sum_{j=1}^{P-1}h_{j,j+1}\leq c_2\:h_{G_P}$ implies that
\begin{align*}
c_1\:H_{\Lambda_k\cup\Lambda_{k+1}}\leq\sum_{i : \{i+1,...,i+P\}\subset\Lambda_k\cup\Lambda_{k+1}}\tau^i\left(\sum_{j=1}^{P-1}h_{j,j+1}\right)\leq c_2\:H_{\Lambda_k\cup\Lambda_{k+1}}.
\end{align*}
It follows that $\sum_{i : \{i+1,...,i+P\}\subset\Lambda_k\cup\Lambda_{k+1}}\tau^i\left(\sum_{j=1}^{P-1}h_{j,j+1}\right)$ has the same kernel as $H_{\Lambda_k\cup\Lambda_{k+1}}$.
Thus the kernels of $H_{\Lambda_k\cup\Lambda_{k+1}}$ and $\hat{H}_{\Lambda_k\cup\Lambda_{k+1}}$ are identical and Corollary~\ref{non-generic} follows from the derivation of Theorem~\ref{thm1}.
\end{proof}
\section{Discussion}
We have proven that in the generic case (G1) the ground state of the parent Hamiltonian
model is stable under sufficiently small perturbations of bounded interaction range. A core point in our approach lies in the construction of a renormalization group flow that converges to an essentially classical model. 
This method provides an independent proof and a clear physical intuition for stability of parent Hamiltonians: \lq\lq{}The parent Hamiltonian model is stable because on sufficiently large scale it is essentially classical\rq\rq{}. Given the previous proof of stability~\cite{Stable,Rob} the core innovation of this article can be seen in this observation. To illustrate how our new intuition can be useful we briefly discuss the classification of quantum phases of gapped systems with MPS ground states. In a recent publication~\cite{MPSph1} it was shown that in the absence of symmetry protection all such systems are in the same phase, up to possible ground-state degeneracies. The techniques developed in our article yield an \lq\lq{}immediate\rq\rq{} proof of this fact, while at the core of the approach of~\cite{MPSph1} lies the so-called \emph{isometric form}, which is a new standard form for MPS. More precisely, two translationally invariant,
gapped, local Hamiltonians $H^{(p)}$ with $p\in\{0,1\}$ are defined to be in same phase~\cite[Section II.C.1]{MPSph1} iff there exists a finite block length $K$ such that after grouping $K$ sites $H^{(p)}$ are two-local and there exists a translationally invariant path of tow-local Hamiltonians $h^{(\gamma)}_{k,k+1}$ with

\emph{i)} $h^{(\gamma=0)}=h^{(0)}$ and $h^{(\gamma=1)}=h^{(1)}$,

\emph{ii)} $\Norm{h^{(\gamma)}}{}\leq1$,

\emph{iii)} $h^{(\gamma)}$ depends continuously on $\gamma$ and

\emph{iv)} $H^{(\gamma)}=\sum_k h^{(\gamma)}_{k,k + 1}$ has a spectral gap above the ground state manifold.

From the preceding discussion we expect that according to this definition a gapped parent Hamiltonian $H$ and the corresponding classical Hamiltonian $H^{\textnormal{CL}}$ are in the same phase. This is clearly the case as we can continuously switch on the perturbations $\sum_k\phi^{(b)}_{k,k+1}$ and $\sum_k\phi^{(r)}_{k,k+1}$ to obtain a gapped Hamiltonian path interpolating between $H$ and $H^{\textnormal{CL}}$. Furthermore, it is clear that any classical states share the same phase~\cite{MPSph1} and~\cite{MPSph2}, which proves the mentioned result.
\begin{corollary}[\cite{MPSph1}]\label{keinBock}
Let $H^{(0)}_\Lambda$ and $H^{(1)}_\Lambda$ be TI Hamiltonians on a ring $\Lambda$ with PBC and suppose that $H^{(0)}_\Lambda$ and $H^{(1)}_\Lambda$ have a unique gapped MPS ground state. If $\Lambda$ is large enough, then $H^{(0)}_\Lambda$ and $H^{(1)}_\Lambda$ share the same quantum phase according to the definition in~\cite[Section II.C.1]{MPSph1}.
\end{corollary}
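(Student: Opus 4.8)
The plan is to reduce the statement to a transitivity argument built from the interpolation already constructed in the proof of Theorem~\ref{thm1}. The relation ``same phase'' of \cite[Section II.C.1]{MPSph1} is generated by continuous gapped paths of normalized two-local Hamiltonians, and concatenating two such paths (after reparametrization, and after coarsening both to a common block length, under which two-locality is preserved) again yields an admissible path; it is therefore symmetric and transitive. Consequently it suffices to connect each $H^{(p)}_\Lambda$ to its associated classical Hamiltonian $H_{\Lambda}^{\textnormal{CL},(p)}$ and then to connect the two classical Hamiltonians to one another. First I would fix a common block length: since each $H^{(p)}_\Lambda$ has a unique gapped MPS ground state, condition (G1) holds for both sets of matrices, so the construction of Subsection~\ref{coreproof} applies to each; choosing $L$ larger than the value required for either state and grouping $L$ sites realises the block length $K$ of the phase definition and makes both $H^{(p)}_\Lambda$ two-local on the blocked chain. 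For each $p$ the proof of Theorem~\ref{thm1} then supplies block-local unitaries $W^{(p)}_{\Lambda_k}$ and a decomposition
\[
(\bigotimes_k W^{(p)}_{\Lambda_k})\,H^{(p)}_\Lambda\,(\bigotimes_k W^{(p)}_{\Lambda_k})^\dagger = H_{\Lambda}^{\textnormal{CL},(p)}+\sum_k\phi^{(b),(p)}_{k,k+1}+\sum_k\phi^{(r),(p)}_{k,k+1},
\]
where the pair $(\alpha,\beta)$ governing the perturbation lies strictly inside the admissible region of Theorem~\ref{lem:yar}.

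The path from $H^{(p)}_\Lambda$ to $H_{\Lambda}^{\textnormal{CL},(p)}$ I would build in two stages. In the first stage, write each block-local unitary as $W^{(p)}_{\Lambda_k}=\exp(iK^{(p)}_{\Lambda_k})$ and conjugate by $\exp(isK^{(p)}_{\Lambda_k})$ with $s$ running from $0$ to $1$; conjugation by a block-local unitary keeps each term supported on two neighbouring blocks, so this is a continuous path of two-local Hamiltonians, and it is isospectral and hence has constant gap. In the second stage, replace $\phi^{(b),(p)}_{k,k+1}$ and $\phi^{(r),(p)}_{k,k+1}$ by $\gamma\,\phi^{(b),(p)}_{k,k+1}$ and $\gamma\,\phi^{(r),(p)}_{k,k+1}$ with $\gamma$ running from $1$ to $0$. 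The key point is that scaling by $\gamma\in[0,1]$ sends $(\alpha,\beta)$ to $(\gamma\alpha,\gamma\beta)$, and since $\gamma\beta\le\beta=\delta(1-\alpha)^{2\kappa}\le\delta(1-\gamma\alpha)^{2\kappa}$ the scaled pair remains inside the admissible region; Theorem~\ref{lem:yar} then furnishes a gap bounded below by a positive constant (independent of $\Lambda$) along the entire segment. A single global rescaling by the uniform bound $M:=\sup_\gamma\Norm{h^{(\gamma)}}{}$ (finite by compactness of $[0,1]$) enforces the normalization $\Norm{h^{(\gamma)}}{}\le1$ of condition \emph{ii)} while leaving the gap positive.

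For the middle segment, the classical endpoints $H_{\Lambda}^{\textnormal{CL},(0)}$ and $H_{\Lambda}^{\textnormal{CL},(1)}$ both have product ground states $\bigotimes\ket{\varphi^{(p)}}$, with $\ket{\varphi^{(p)}}=\sum_i\sqrt{\xi^{(p)}_i}\ket{ii}$, on half-shifted super-sites of the same dimension $d^{L}$ (same local dimension $d$ and same $L$). Rotating $\ket{\varphi^{(0)}}$ into $\ket{\varphi^{(1)}}$ by a block-local unitary and interpolating the corresponding projectors $\id-\proj{\varphi^{(p)}}{\varphi^{(p)}}$ gives a gapped, normalized, two-local path between the two classical models; this is precisely the fact, invoked from \cite{MPSph1,MPSph2}, that all classical product-state Hamiltonians lie in one phase. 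Concatenating the five segments (conjugate, scale off, classical-to-classical, scale on, conjugate back) produces the desired gapped path from $H^{(0)}_\Lambda$ to $H^{(1)}_\Lambda$.

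I expect the main obstacle to be bookkeeping rather than a single sharp estimate: one must arrange the block lengths, the dimensions of the half-shifted super-sites, and the global normalization \emph{simultaneously} for both models, so that all five segments live on the \emph{same} blocked Hilbert space and can legitimately be concatenated into one admissible path (this is also what makes transitivity of the relation usable here). The only genuinely external input is the connectivity of classical product-state Hamiltonians used in the middle segment; everything else rests on the openness of the hypotheses of Theorem~\ref{lem:yar}, i.e.\ their monotonicity in $(\alpha,\beta)$, which is exactly what guarantees that the scaled perturbation path stays gapped.
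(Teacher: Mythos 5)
Your proposal follows essentially the same route as the paper: the paper's own proof of Corollary~\ref{keinBock} is exactly the two-step sketch of continuously switching off the perturbations $\sum_k\phi^{(b)}_{k,k+1}+\sum_k\phi^{(r)}_{k,k+1}$ to connect each Hamiltonian to its classical counterpart (kept gapped by the openness of the hypotheses of Theorem~\ref{lem:yar}), followed by the cited fact that all classical models share one phase. Your write-up is in fact more careful than the paper's one-paragraph argument --- in particular the explicit unitary-conjugation stage and the verification that $(\gamma\alpha,\gamma\beta)$ remains in the admissible region are details the paper leaves implicit.
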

A weakness of this characterization of quantum phases lies in the underlying definition, which allows blocking of physical sites.
This sheds translational invariance and the local
structure of any particles grouped into one block. As a result this approach is not suitable for the study of phases with \lq\lq{}spontaneous translational symmetry breaking\rq\rq{}. These issues are
addressed in~\cite{MPSph2}.

In this paper we focused on perturbations $\Phi_\Lambda=\sum_x\phi_x$ that have fixed interaction range. In~\cite{Stable} more general perturbations are studied, namely it is only assumed that $\phi_x=\sum_y^{N-1}\phi_{x,y}$ with interactions $\phi_{x,y}$ that act on an interval $[x-y,x+y]$ and $\norm{\phi_{x,y}}{}\leq f(y)$ for some sufficiently fast decaying function $f$. In particular, it is shown that decay faster than $f(y)=J(1+y)^{-3}$ for suitable $J$ is sufficient for stability.
In this context it is important to note that in Theorem~\ref{lem:yar} we have $\delta=\delta(\Lambda_1)$, where $\Lambda_1$ denotes the interaction range of $\Phi$. Our proof of stability shows that if classical Hamiltonians are stable under rapidly decaying perturbations in the sense of Theorem~\ref{lem:yar} this property carries over to the parent Hamiltonian
model. To generalize our discussion one could extend the derivation of Theorem 1 to analyse the dependency $\delta(\Lambda_1)$, which however was already studied in more general context~\cite{trrrr}. Hence we can conclude that under sufficiently fast decaying perturbations stability still holds.

\begin{acknowledgements} 
We acknowledge financial support from the QCCC programme of the Elite Network of Bavaria, the CHIST-ERA/BMBF project CQC and the Alfried Krupp von Bohlen und Halbach-Stiftung.
\end{acknowledgements} 
%

\end{document}